\documentclass[letterpaper,10pt]{article} 

\usepackage{opticameet3} 

\newcommand\authormark[1]{\textsuperscript{#1}}

\usepackage{amsmath,amssymb}
\usepackage[colorlinks=true,bookmarks=false,citecolor=blue,urlcolor=blue]{hyperref} 

\usepackage{graphicx}
\usepackage{subcaption}
\usepackage{amsmath}
\usepackage{amsthm}
\usepackage{comment}
\usepackage{placeins}
\usepackage{booktabs} 
\usepackage{colortbl} 
\usepackage{xcolor}
\usepackage[square,numbers]{natbib}

\bibliographystyle{naturemag-doi}

\newtheorem{prop}{Proposition}
\newtheorem{defi}{Definition}

\newtheorem{cor}{Corollary}

\begin{document}

\title{Imaging-based representation and stratification of intra-tumor Heterogeneity via tree-edit distance}


\author{Lara Cavinato,\authormark{1,*} Matteo Pegoraro,\authormark{1} Alessandra Ragni,\authormark{1}, Martina Sollini,\authormark{2,3}, Paola Anna Erba,\authormark{4,5} and Francesca Ieva\authormark{1,6}}

\address{\authormark{1} MOX, Politecnico di Milano, Deparment of Mathematics, Milan, 20133, Italy \\
\authormark{2} Department of Biomedical Sciences, Humanitas University, Pieve Emanuele, Italy \\
\authormark{3} IRCCS Humanitas Research Hospital, Rozzano, Italy. \\
\authormark{4} Regional Center of Nuclear Medicine, Department of Translational Research and Advanced Technologies in Medicine and Surgery, University of Pisa, 56126, Pisa, Italy \\
\authormark{5} Medical Imaging Center, University of Groningen, University Medical Center Groningen, Groningen, The Netherlands \\
\authormark{6} Human Technopole, Health Data Science Center, 20157, Milan, Italy}

\email{\authormark{*}lara.cavinato@polimi.it} 

\begin{abstract}
Personalized medicine is the future of medical practice. In oncology, tumor heterogeneity assessment represents a pivotal step for effective treatment planning and prognosis prediction. Despite new procedures for DNA sequencing and analysis, non-invasive methods for tumor characterization are needed to impact on daily routine. On purpose, imaging texture analysis is rapidly scaling, holding the promise to surrogate histopathological assessment of tumor lesions. In this work, we propose a tree-based representation strategy for describing intra-tumor heterogeneity of patients affected by metastatic cancer. We leverage radiomics information extracted from PET/CT imaging and we provide an exhaustive and easily readable summary of the disease spreading. We exploit this novel patient representation to perform cancer subtyping according to hierarchical clustering technique. To this purpose, a new heterogeneity-based distance between trees is defined and applied to a case study of Prostate Cancer (PCa). Clusters interpretation is explored in terms of concordance with severity status, tumor burden and biological characteristics.
Results are promising, as the proposed method outperforms current literature approaches.
Ultimately, the proposed methods draws a general analysis framework that would allow to extract knowledge from daily acquired imaging data of patients and provide insights for effective treatment planning.
\end{abstract}

\section{Introduction}
\label{sec:introduction}

The current paradigm shifting of modern medical practice sinks its root in providing personalized treatments and improving therapy outcomes. Huge strides have been made in oncology with the uprising of quantitative imaging techniques and new procedures for DNA sequencing and analysis that allow an extensive characterization of cancer subtypes.
In particular, recent research has investigated the main causes of cancer progression, resistance to therapy and late recurrence.
Among these, tumor heterogeneity has gained special interest and has been recognized to play a crucial role \cite{fisher2013cancer}:
defined as complex genetic, epigenetic and protein modifications that can be found within the same patient's disease, tumor heterogeneity behaves as a driver for phenotypic selection. According to Stanta and Bonin and y Cajal et al. \cite{stanta2018overview, y2020clinical}, different types of tumor manifestation may exist as a response to microenvironmental and external changing, differing between primary tumor and proximal and distant metastases. As a result, certain tumor phenotypes properly respond to therapies and others become resistant clones, leading to treatments ineffectiveness and cancer progression. 
Pertinently, detecting at baseline which phenotype will respond and which will not - known as \textit{prognostic cancer subtyping} - represents a pivotal step in personalized medicine.

Although recent findings about heterogeneity suggest that therapy would be improved if guided by the analysis of both primary and metastatic tissues  - such as lymph nodes \cite{cummings2014metastatic} -, clinical practice usually relies on primary tumor biomarkers for prognosis definition and treatment planning. Thus, baseline assessment emerges altered by the understimation of intra-tumor heterogeneity which behaves as confounding factor in pre-treatment clinical-pathological prognosis, leading to poor survival rates \cite{esparza2017breast}.
This misalignment between research evidence and clinical practice seems mostly due to the lack of non-invasive methods for heterogeneity quantification. Accordingly, current prognostic cancer subtyping cannot be translated into daily clinical practice and therapeutic guidelines.

Over the last two decades, the texture analysis of digital images - such as Magnetic Resonance Imaging (MRI) and Positron Emission Tomography / Computer Tomography (PET/CT) - has arisen as a valuable non-invasive proxy for biological assessment of tumors, eventually growing in a discipline of its own, namely radiomics \cite{mayerhoefer2020introduction}
Specifically, macroscopic appearance of tumors has been acknowledged as a valid tool for guiding clinical decisions in the definition of disease severity and treatment planning.
Broadly speaking, image texture analysis consists of extracting descriptors of spatial variation of voxel grey-scale  and intensity within the image Volumes Of Interest (VOI), i.e., the tumor lesions.
Under the name of radiomic features, such textural descriptors form a high dimensional vector embedding of the VOI and may provide a non-invasive assessment of tumor appearance from routinely acquired imaging studies \cite{gillies2016radiomics}. These features are indeed supposed to supply additional predictive and prognostic information, ready to use to postulate the underlying biological mechanisms of disease progression in clinical routine \cite{chicklore2013quantifying}. 
Accordingly, the dissimilarity in the appearance of different lesions, therefore in their texture descriptions, can be regarded as \textit{radiological} heterogeneity, which can be easily quantified and leveraged in the daily practice.

Despite the increasing interest in tumor heterogeneity, imaging-guided therapy currently employs  biomarkers for tumor burden that stem from the characterization of the primary tumor, the bigger lesion (often coinciding with the hottest lesion) or the mean lesions' profile. Only recently few radiomics-based approaches have been suggested - for prognosis, treatment outcome and survival prediction - which consider the multi-lesion disease in a comprehensive way. In particular, several researchers \cite{eertink202218f,ceriani2020sakk38,burggraaff2020optimizing} proposed different segmentation strategies for feature extraction from patient level VOIs, while Cottereau et al. \cite{cottereau202018f} evaluated the predictive power of several indicators reflecting the spatial distributions of malignant \textit{foci} spread throughout the whole body. A number of \textit{dissemination} features have been explored and reviewed:  the number of lesions, the euclidean distance between crucial or predominant bulks, the largest value of the pairwise sum of the physical distances between lesions, etc.
Stemming from a similar idea, Cavinato et al. \cite{cavinato2020pet} proposed a similarity metric for comparing lesions’ texture descriptions, defining intra-patient heterogeneity as the normalized average of pairwise distances between lesions' radiomic vectors.
This similarity over patient's lesions description has thus been suggested as functional, rather than spatial, dispersion index for tumor burden and disease severeness, with promising results in Hodgkin Lymphoma \cite{sollini2020methodological} and Prostate Cancer \cite{sollini202118f}.
Preliminary results represent an insightful starting point in the debate around the proper definition of heterogeneous disease.

In this work, motivated by the need to embed tumor heterogeneity quantification into patients’ clinical pathway planning, we propose a novel way for modeling intra-patient tumor heterogeneity in a non-invasive way, leveraging the radiomic framework. 
Specifically, we perform dimensionality reduction on radiomic vectors, as to remove redundancy and collinearity while preserving the multi-view nature of the texture description. Reduced vectors of peer lesions within the same tumor are then compared via pairwise distances.
Representing the patient via the pairwise distance matrix of its lesions makes it laborious to compare patients with different numbers of lesions. For this reason, upon lesions’ distance matrix, we build a dendrogram, which hierarchically aggregates peer lesions in a unique combinatorial object. This object-oriented representation summarizes the multi-lesion disease and highlights the evolutionary relationship among lesions, basing on similarities in their imaging characteristics. In fact, lesions are not independent as they are statistically and semantically connected to the patient they belong to. Accordingly, such relationship shapes and influences the structure of the dendrogram associated to the patient.
We then exploit the tree-based patient representation to cluster cancer subtypes according to their imaging heterogeneity. To do so, we define a new \textit{ad hoc} distance between trees. 
To validate the method, we test the whole pipeline on a dataset of patients affected by metastatic Prostate Cancer (PCa), evaluating the descriptive and stratification performance in terms of disease severeness and outcomes. We associate imaging subtypes to clinically relevant information within and beyond clinical surrogates, with the goal of eventually supporting therapy decisions wherein actions regarding active surveillance, mild treatment or intensified therapy are devised and taken \cite{fisher2013cancer}. 

\section{Results}

\subsection{Case study: Prostate Cancer}

Within the personalized medicine framework, Prostate cancer (PCa) is a striking example of the need to exploit an insightful prognostic cancer subtyping for treatment planning. In fact, even if recent studies have reported a decreasing pattern of overall PCa incidence, Culp et al. \cite{siegel2020prostate} and Siegel et al. \cite{culp2020recent} recorded an alarming mortality rate due to an increasing trend of distant stage metastatic disease, even in developed countries.
Moreover, the role of imaging-guided therapy for PCa has revealed to be very promising and is consistently spreading in daily practice \cite{giovacchini201011c}. Despite these facts, clinical guidelines still relies on primary tumor biomarkers. Besides, very limited methods have been proposed for reliably assessing and quantifying multi-lesion heterogeneity information within the same patient from an imaging point of view.
This misalignment between research evidence and clinical routine results in poor disease free survival rates, mostly due to the lack of non-invasive methods for heterogeneity quantification.

The case study analyzed in this work is composed by a set of $N=333$ lesions belonging to fifty-five patients of Azienda Ospedaliero-Universitaria Pisana with multi-site, multi-lesion, recurrent Prostate Cancer confirmed with a positive 
PET/CT study.
The study was performed in accordance with the Declaration of Helsinki and approved by the local ethics committee. The signature of a specific informed consent and the legal requirements of clinical trials were waived given the observational retrospective study design.
During the observational trial, patients showed evidence of biochemical recurrence after first-line treatments, exhibiting metastatic disease.
Every patient manifested a different number of tumor lesions $n_i$, according to the spreading burden of the metastatic tumor. Information about age, sex, lesion site, total tumor volume, Gleason Score \cite{epstein20162014}, Prostate Specific Antigen \cite{balk2003biology} and therapy treatment was collected per each patient. 
Personal information and qualitative tumor data are displayed in Table \ref{tab:summary_continuous} and Table \ref{tab:summary_categorical}.
Additionally, from PET/CT, volumes of interest, i.e. lesions, were segmented by experienced nuclear medicine physicians and texture features were extracted over VOIs according to the radiomic framework, resulting in forty radiomic features ($p=40$). Both the segmentation of lesions and radiomic features extraction were performed using LifeX software \cite{nioche2018lifex}, according to the formulas detailed in the software documentation (www.lifexsoft.org).

\begin{figure}[t]
    \includegraphics[width=\textwidth]{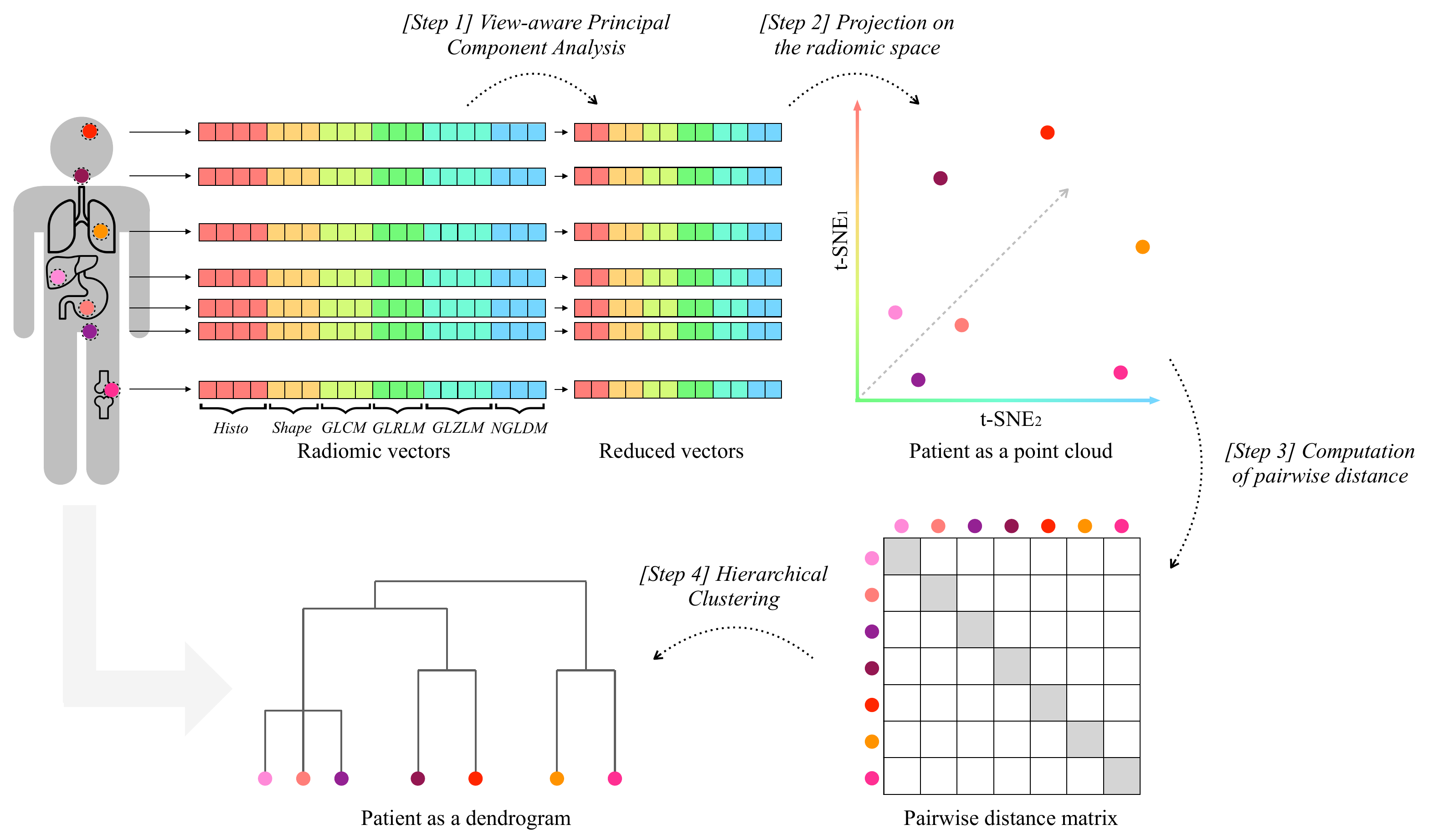}
    \caption{Patient representation pipeline: lesions’ radiomic vectors of each patient are dimensionally reduced according to view-aware Principal Component Analysis. 
    \textit{[Step 1]} Features are grouped according to the six semantic group, or \textit{view}, they are semantically divided into.
    As to preserve a balanced importance between views, two principal components are kept from the scores of each PCA, leading to different percentages of explained variability. 
    A total of twelve principal components results from the process, which include six orthogonal pairs of linear combinations of original features.
    \textit{[Step 2]} Accordingly, patients are represented as finite sets of $n_i$ points in $\mathbb{R}^{12}$, that is the reduced radiomic space according to view-aware strategy implementation. In the example, $n_i=7$.
    \textit{[Step 3]} Pairwise (Euclidean) distance is compute among patients’ lesions and \textit{[Step 4]} hierarchical clustering with \textit{average} linkage is applied to distance matrices, resulting in a dendrogram $T$ representing each patient.}
    \label{fig:pipeline} 
\end{figure}

We fed Prostate Cancer imaging data into the pipeline described in Fig. \ref{fig:pipeline}, obtaining a tree-based representation $T$ for each of the patients.
The pruned edit distance $d_P^\mu$, as defined in the Methods, was implemented and leveraged to compute the patient-to-patient distance matrix. 
Clustering of patients was thus completed according to hierarchical clustering algorithm with the proposed \textit{ad hoc} distance and \textit{ward} linkage. Specifically, \textit{ward} linkage was chosen as it provided better results under a prognostic point of view with respect to other linkages (for a definition of all available linkages see Appendix \ref{sec:build_dendro}).
The number of clusters was selected over the range $[2,5]$, as a trade off between similarity performance and interpretability. Specifically, the number $k$ that presented both a reasonable silhouette coefficient and a high concordance (in terms of mutual information) with therapy response was selected.
The resulting classes could then be intended as groups of patients with similar representations in terms of heterogeneous disease, to be characterized according to exogenous clinical variables and risk assessment.

\subsection{Clusters characterization}
\label{sec:comparison}

As to profile the clustering, we describe how the stratification procedure captures the differentiation of tumor heterogeneities and provide a clinical/biological interpretation.

Upon pipeline implementation, hierarchical clustering identified three groups: groups 0, 1 and 2 hosted 39, 10 and 6 patients respectively.
In Fig. \ref{fig:gPCA_heights} the curves of the heights of the trees' vertices over the three groups can be appreciated: branches present different average heights according to the group their dendrograms belong (see Fig. \ref{fig:gPCA_heights}). Groups are shown to entail different heterogeneity extent, following an ANOVA functional approach \cite{pini2017fun_test} \cite{horvath2012inference}.

Beside the group-wise characterization of tree conformation as manifestation of tumor heterogeneity, clinical variables were used as exogenous factors to characterize and interpret the groups. We used appropriate tests according to the variable type, normality of data and sample size. Normality was tested according to the Shapiro test. We thus employed Mann-Whitney non-parametric tests for comparing distributions of continuous (non-normal) variables; parametric t-tests for testing the difference of means in continuous (normal) variables; Levene non-parametric tests for comparing variances of continuous (non-normal) variable; Bartlett parametric tests for continuous (normal) variable ratio of variances; $Chi-squared$ tests for independence of categorical variable. 
P-values are indicated respectively as $p_{m/d}$ for tests on means/distributions, $p_{var}$ for tests on variance and $p_{ind}$ for tests on independence.
Pairwise one-sided comparison between groups rather than multivariate analysis was investigated as to provide a group-wise characterization. As to avoid potential Type II errors due to small sample size, value of $\alpha=0.1$ was considered for significance.

We evaluated the differences between the obtained groups in terms of number of oligo/multi-metastatic patients (as classified with two different clinical cut-offs of 3 and 5 lesions), number of patients with bone disease, total tumor volume and number of tumor lesions. Also, the implementation of combined therapy (such as joint radiotherapy and chemotherapy with respect to only chemotherapy) and response to therapy were evaluated in patients of different groups.
Additionally, among clinical prognostic tools, tumor aggressiveness is usually assessed with Gleason Grading System (or Gleason Score) \cite{epstein2016contemporary}. A Gleason Score (GS) is given to Prostate Cancer based upon its microscopic appearance with respect to cell differentiation. Pathological scores represent the sum of the primary and secondary patterns (each ranging from 1 - well differentiated, like normal cells - and 5 - poorly differentiated, i.e., abnormal cells) and range from 2 to 10. Higher numbers indicate more aggressive disease, worse prognosis and higher mortality \cite{epstein20162014}. In particular, patients with Gleason Score exceeding the value of 7 experience extraprostatic extension and biochemical recurrence more frequently than others \cite{draisma2006gleason}.
Accordingly, clusters were also analyzed in terms of mean Gleason Score and number of patients exceeding GS of 7.

Besides, Prostate Specific Antigen (PSA) has been proposed for screening, assessment of future risk of prostate cancer development, detection of recurrent disease after local therapy and treatment planning of advanced disease. Often employed as criteria in combination of stage and GS, its role in early stage assessments is still debated due to instability of measurements and the presence of confounding factors. However, PSA is still considered a valid tool for prognosis and treatments in advanced stages of metastatic prostate cancer \cite{pezaro2014prostate}. 
Moreover, PSA values after cytotoxic regimens has been shown to predict survival. Particularly, the decrease in PSA levels is associated to therapy response in soft tissue lesions and thus could be intended as a proxy of therapy outcome \cite{smith1998change}. 
Accordingly, we recorded PSA levels before the therapy (PSA0), right after the first line of therapy (PSA1) and at the end of the follow up (PSA2). Delta-PSA levels were computed between PSA1-PSA0 and PSA2-PSA0 as proxies of cancer evolution. In the following, they will be referred as PSA, $\Delta PSA_{1,0}$ and $\Delta PSA_{2,0}$.

Table \ref{tab:results_gPCA} and Fig. \ref{fig:gPCA_curves} elucidate the results.
The profile of the blue and green groups are very similar for what PSA ($p_{m/d}=0.3787$, $p_{var}=0.4714$) and $\Delta PSA_{1,0}$ ($p_{m/d}=0.3477$, $p_{var}=0.4533$) are concerned, with a very limited range of values concentrated around zero.
Different trends are exhibited by the blue and green curves of the $\Delta PSA_{2,0}$ ($p_{m/d}=0.0591$), where the difference could support the hypothesis of different cancer evolution starting from similar baseline assessments. Yet, they present similar variance ($p_{var}=0.2159$).
The orange group, on the other hand, presents wider ranges and higher intra-group heterogeneity.
In particular, orange PSA is significantly higher than the blue group with a much more spread distribution ($p_{m/d}=0.0116$; $p_{var}=0.0013$) yet no statistical difference with the green groups is confirmed ($p_{m/d}=0.3089$; $p_{var}=0.1845$); orange $\Delta PSA_{1,0}$ is significantly lower than the blue group ($p_{m/d}=0.0019$) but not than the green one ($p_{m/d} = 0.1810$), however its distribution appears more spread and inhomogeneous, covering both the negative and the positive axis, in both cases ($p_{var}=0.0003$; $p_{var}= 0.0995$). The $\Delta PSA_{2,0}$ of the orange group does not vary from the one of the blue group ($p_{m/d}=0.3689$). However, it shows a higher variance than the other, suggesting a heterogeneous long-term tumor prognosis ($p_{var}=0.0066$).
Also, the orange group and the green group do not differ significantly in their average ($p_{m/d}=0.1855$) but their variances reveal a mild divergence in terms of distribution kurtosis ($p_{var}=0.1085$).

\begin{figure}[h!]
	\centering
	\begin{subfigure}[c]{\textwidth}
    	\centering
    	\includegraphics[width = \textwidth]{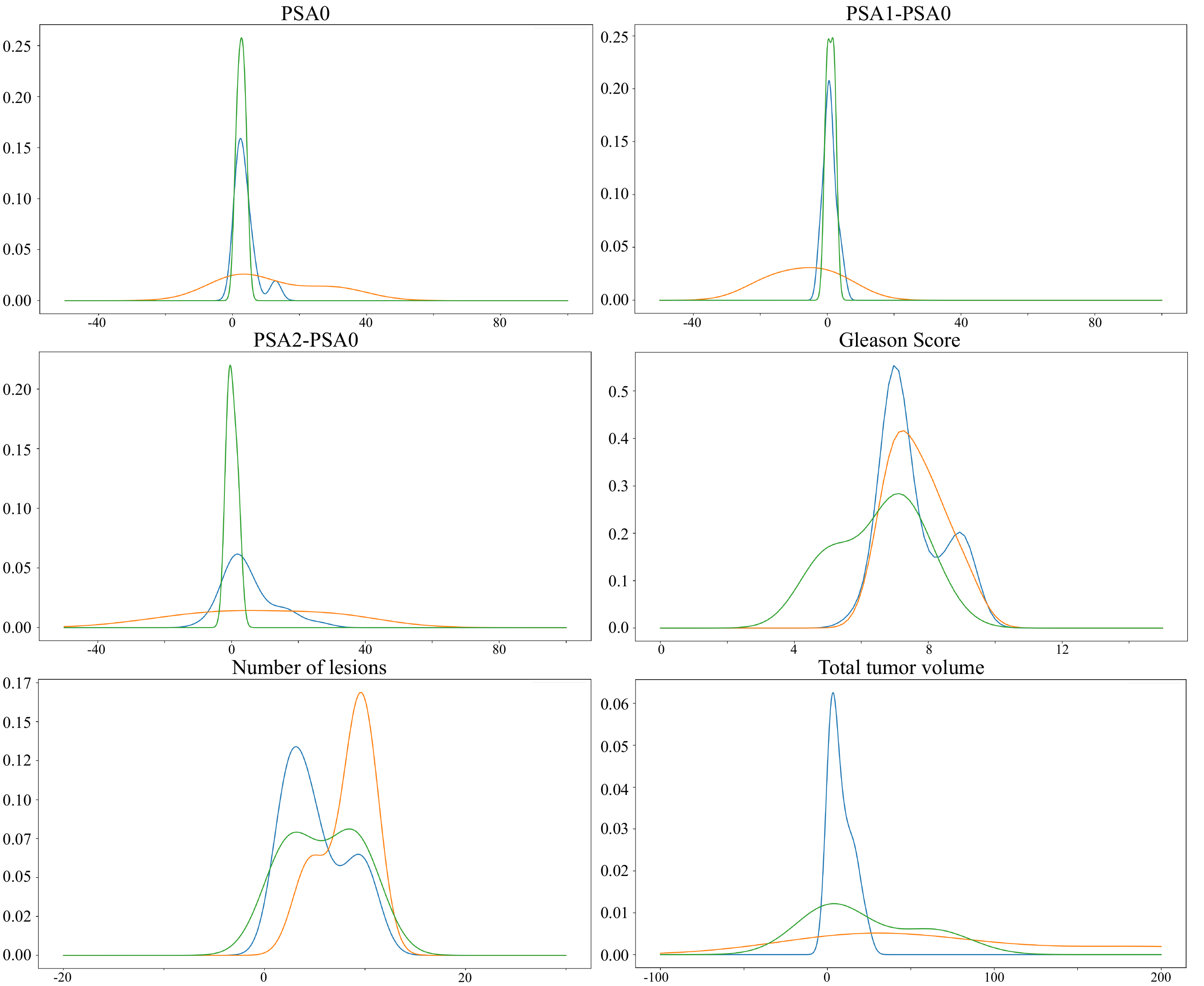}
    \end{subfigure}
	\begin{subfigure}[c]{\textwidth}
		\centering
		\includegraphics[width = \textwidth]{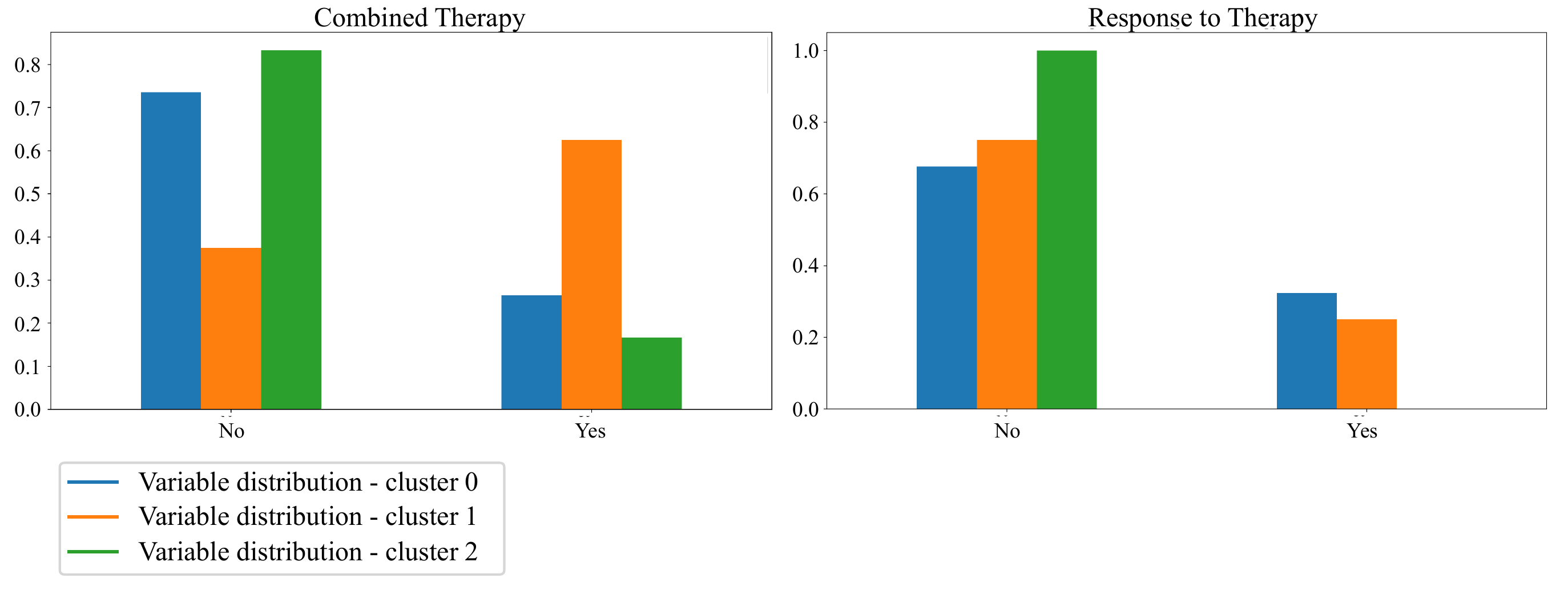}
	\end{subfigure}
    \caption{Results of clustering characterization: first three rows draw the distributions of the numerical clinical variables in the three groups, namely the PSA values, the $\Delta PSA_{1,0}$, the $\Delta PSA_{2,0}$, the number of lesions, Gleason Scores and the total tumor volume; last row shows the proportions of the categorical clinical variables in the three groups, that are the combination of therapy and the response to treatment. For the proportion of skeleton disease and of the oligo/multi-metastatic status as devised by the two clinical cut-offs (3 and 5 lesions) see Appendix \ref{sec:additional}.}
    \label{fig:gPCA_curves}
\end{figure}

Regarding the number of lesions, the orange group displays a higher number of metastases than the blue one ($p_{m/d}=0.0081$). The green group exhibits a behavior very similar to the blue group ($p_{m/d}=0.4162$), diverging from the orange group with respect to which it presents fewer lesions ($p_{m/d}=0.0722$).
Moreover, total volume of the tumor is related to the number of lesions. In fact, the blue group displays a reduced spreading of the tumor over the body with respect to the orange group ($p_{m/d}=0.0002$) but not to the green group ($p_{m/d}=0.4917$). The orange and the green groups also exhibit a statistical difference in terms of tumor volume ($p_{m/d}=0.0306$).
Of note, despite the number of metastases in the blue and green groups are very similar, it should be noticed that their tumor spreading appears shifted in the figure, entailing unrelated tumor burden information. Similarly, the orange group, while presenting a greater number of lesions, shows an extension of the tumor visually analogous to the green group. Such discrepancy is imputable to the difference of variances the distributions display. 

From these consideration, it appears clear how the green group shows phenotypic similarities and dissimilarities with respect to both blue group and orange group, presenting an in-between behavior.
However, the detach of green patients from the rest of the population is mostly driven by the different distribution of GS levels.
In fact, the blue and orange groups do not show peculiar differences ($p_{m/d}=0.2967$), although both differ from the green group, compared to which they have a higher GS ($p_{m/d}=0.0419$; $p_{m/d}=0.0601$). As it will be further discussed in discussion, prognostic power of GS values should be taken with the grain of salt due to their qualitative and aggregated nature.

\begin{figure}[t]
	\centering
	\includegraphics[width = \textwidth]{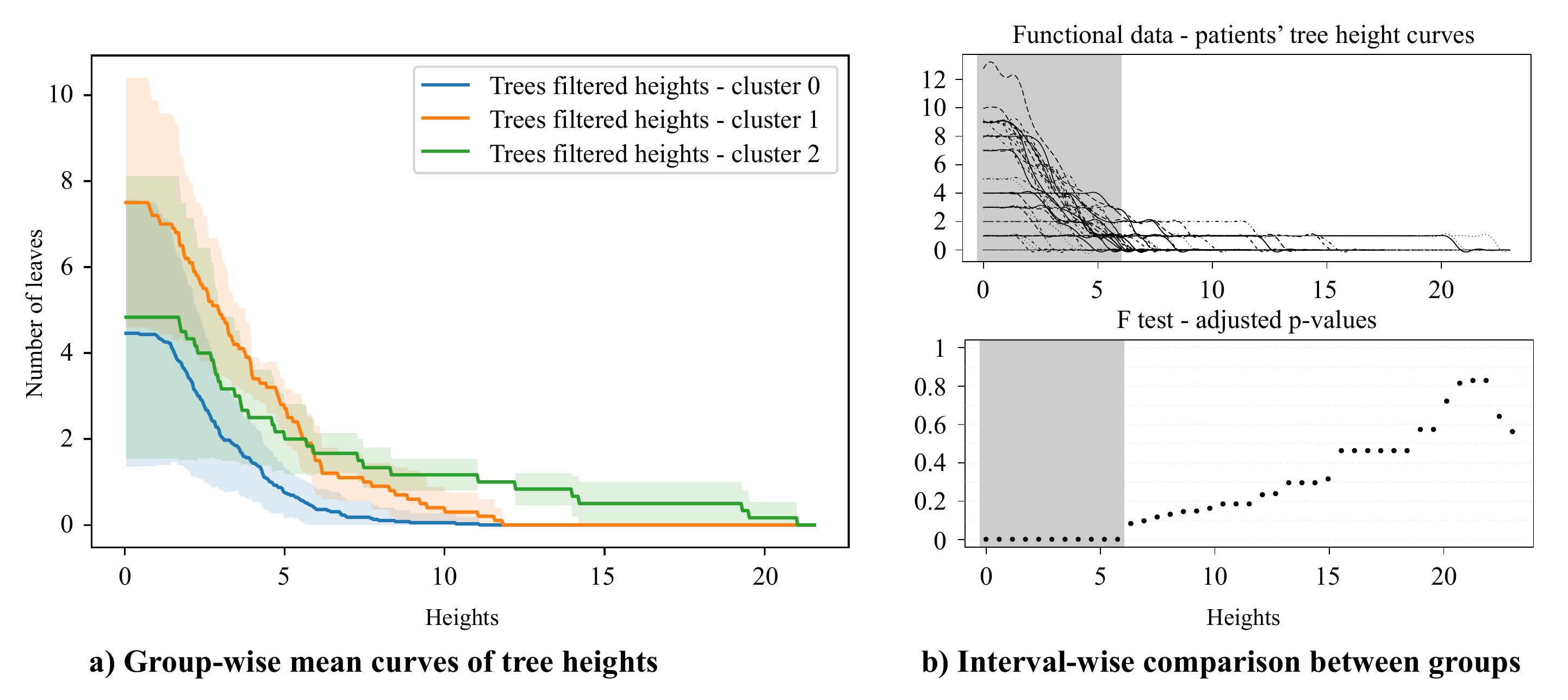}
    \caption{a) Curves displaying the \textit{filtered} heights of the trees' vertices for the three groups. Operationally, curves were built as follows: for any fixed height (x-axis), for any tree in the selected group, we count the number of nodes whose height value is greater than the fixed one (y-axis). For the step-by-step procedure see Appendix \ref{sec:building_curves}. The curves in the plot represent the pointwise within-group means of such counts, and the shaded regions cover an area of $1$ standard deviation around the means. The values of such counting process result in a monotonically non-increasing function detecting information about trees' heterogeneity. In fact, 
    higher values of such function, especially as the height threshold becomes bigger and bigger, correspond to a greater number of heterogeneous lesions in the patients.
    Patients of group 0 (blue line) are characterized by a very homogeneous disease where trees branches are on average less and very short compared to the other groups; patients of group 1 (orange line) tend to exhibit more lesions than patients belonging to group 0, lesions which are intermediately heterogeneous, as their representation trees display both short branches and longer branches than group 0; patients in group 2 (green line) are associated to very heterogeneous diseases, displaying a similar number of lesions to group 0, but with the associated branches being much longer. A synthetic example of tree per each group is displayed in Fig. \ref{fig:mu}, elucidating the differences with a graphical support.
    b) Functional comparison between curves: in order to test the hypothesis that curves belonging to different groups are different, we use the ANOVA procedure proposed in \cite{pini2017fun_test}. It outputs an interval-wise adjusted p-value function. Depending on the sort and level $\alpha$ of Type-I error control, significant intervals can be selected. Here, we highlighted in grey the region of significance. Of note, the curves appear different for what homogeneity-heterogeneity balance is concerned; they loose significance as they approach very big height values.
    }
\label{fig:gPCA_heights}
\end{figure}

As for the clinical assessment of patients, the blue and green groups present similar to each other yet opposite characterizations with respect to the orange group. 
They display a lower percentage of patient with bone disease ($p_{ind}=0.0769$; $p_{ind}=0.1729$), therefore fewer people who have undergone an invasive combination of therapies ($p_{ind}=0.0517$; $p_{ind}=0.0863$). 
Moreover, although the results on the response to therapy are not significant due to the limited data available, they reveal a certain trend. In fact, both blue and green groups of patients are administered a milder therapy with respect to orange group. On one hand, such treatment results effective for the blue group, which shows the highest percentage of responders; while, on the other hand, this is not the case for the green group, which manifests the highest percentage of non-responders. Group 2 thus exhibit a clinical characterization comparable to group 0, whereas tree conformation analysis and prognostic assessment, i.e., response to therapy, agree in granting it a higher score of risk. 
Finally, the orange group presents the highest number of multi-metastatic patients, 
followed by the blue group and finally the green group, which hosts mostly oligo-metastatic patients.

From Fig. \ref{fig:km_gPCA}, some extent of stratification is appreciable, although the groups’ survival curves separation is not neat and statistically significant ($p = 0.12$). All patients of group 0 gradually respond since they feature mild disease, both from a structural, i.e., tree conformation, and clinical point of view. The green group host patients who the clinic would treat as not severe (in terms of number of lesions, GS and PSA baseline information), but our radiomics investigation has put in an at risk group, to be properly monitored, in terms of tree structure and tumor extension. In line with the results of our policy, these patients do not respond to therapy during the study period. Finally, the orange group carries severe patients from both a structural and a clinical point of view. 

\begin{figure}[t]
	\centering
	\includegraphics[width = \textwidth]{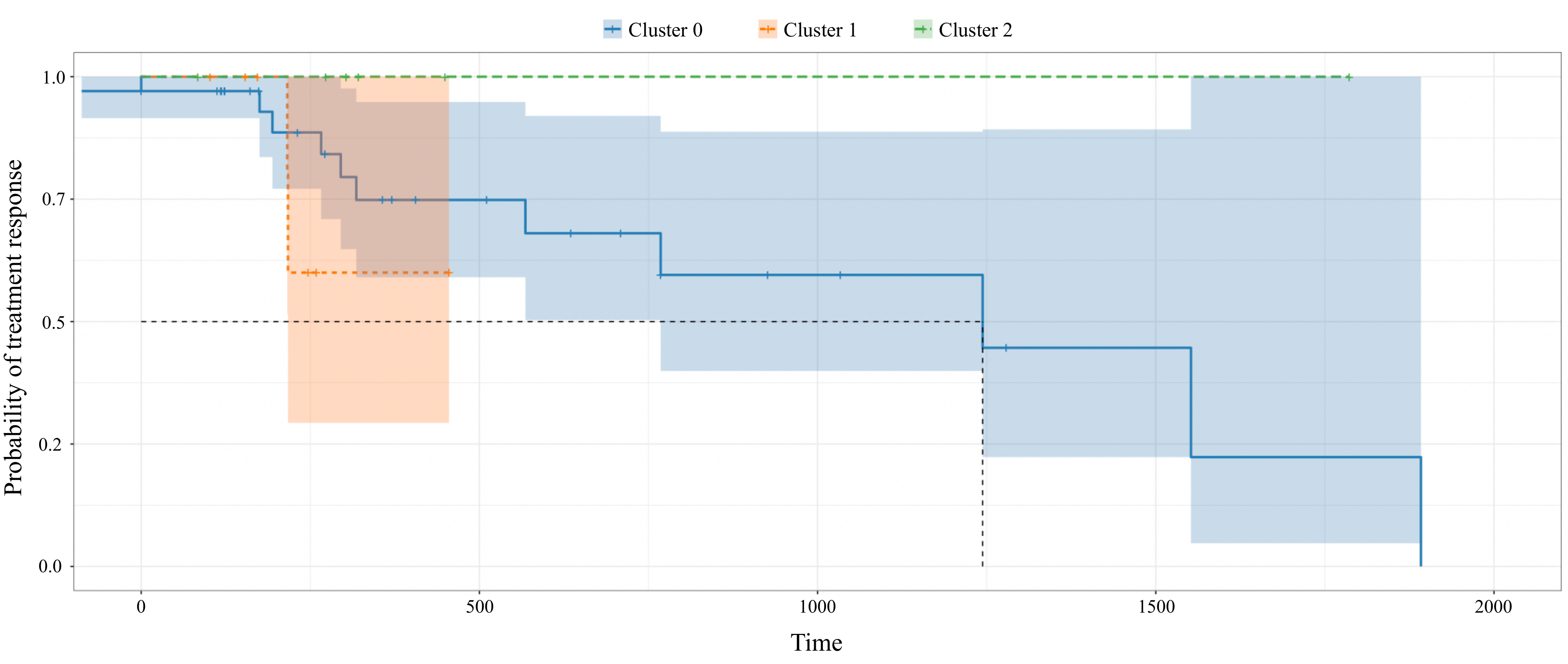}
    \caption{Group-wise Kaplan Meier curves of time to therapy response: it visually shows the probability of the response to treatment in a certain time interval. The blue line, the orange line and the green line correspond to group 0, 1 and 2 arising from clustering performed on patients' dendrograms. Groups have a different time to response. In particular, green group does not respond to therapy along the study period. Orange group shows indeterminate results due to the lack of and heterogeneity of clinical data. Blue group gradually responds throughout the study period.}
    \label{fig:km_gPCA}
\end{figure}

Since unsupervised approaches are thoroughly dataset dependent, hierarchical clustering grouped in the same clusters very heterogeneous patients, due to the limited data available. In fact, clinical variable variance of orange patients was consistently larger than other groups - despite not being the largest cluster. Interestingly, we fit a DBSCAN (Density Based Spatial Clustering of Applications with Noise) algorithm \cite{khan2014dbscan} on the pruned-edit distance matrix which lead to the same clustering policy of patients. In this setting, while blue and green groups were confirmed to be clusters with similar density, the orange group was classified as noise, i.e., observations that display inconsistent density characterization. 
Accordingly, a couple of patients responded to therapy while the majority did not respond and entered more invasive treatments. For these reasons, the orange survival curve is hardly interpretable and is left out the discussion. For sure, the high variability of this group testifies that a larger testing cohort would allow to identify further separations within this group, leading to clearer prognostic results. 

\begin{table}
\centering
    \begin{tabular}{llccc}
    \toprule 
    Variable & Test on & 0 vs 1 & 0 vs 2 & 1 vs 2 \\
     &  & (p-values) & (p-values) & (p-values) \\
    \midrule
    \rowcolor{black!10} GS & Mean & 0.2967 & \textbf{0.0419} & \textbf{0.0601} \\
    \rowcolor{black!10} & Variance & 0.8368 & 0.5433 & 0.7093 \\
    Gleason Category & Independence & 0.5129 & 0.5056 & 0.3077 \\
    \rowcolor{black!10} Oligo or Multi ($>3$) & Independence & \textbf{0.0601} & 0.9260 & 0.1729 \\
    Oligo or Multi ($>5$) & Independence & \textbf{0.0848} & 0.6868 & 0.3339 \\
    \rowcolor{black!10} $3<$Lesions$\leq5$ & Independence & 0.1969 & 0.9022 & 0.3950 \\
    N lesions & Mean & \textbf{0.0081} & 0.4162 & \textbf{0.0722} \\
     & Variance & 0.3871 & 0.4357 & 0.1469 \\
    \rowcolor{black!10} Skeleton & Independence & \textbf{0.0769} & 0.9622 & 0.1729 \\
    Total Volume (ml) & Mean & \textbf{0.0002} & 0.4917 & \textbf{0.0306} \\
     & Variance & \textbf{0.0000} & \textbf{0.0047} & 0.2009 \\
    \rowcolor{black!10} PSA & Mean & \textbf{0.0116} & 0.3787 & 0.3089 \\
    \rowcolor{black!10}  & Variance & \textbf{0.0013} & 0.4714 & 0.1845 \\
    $\Delta PSA_{1,0}$ & Mean & \textbf{0.0019} & 0.3477 & 0.1810 \\
     & Variance & \textbf{0.0003} & 0.4533 & \textbf{0.0995} \\
    \rowcolor{black!10} $\Delta PSA_{2,0}$ & Mean & 0.3689 & \textbf{0.0591} & 0.1855 \\
    \rowcolor{black!10} & Variance & \textbf{0.0066} & 0.2159 & \textbf{0.1085} \\
    Ongoing Therapy & Independence & \textbf{0.0601} & 0.5875 & 0.3339 \\
    \rowcolor{black!10} Combined Therapy & Independence & \textbf{0.0517} & 0.6091 & \textbf{0.0863} \\
    Therapy Response & Independence & 0.6856 & 0.127 & 0.2907 \\
    \bottomrule
    \end{tabular}
    \caption{Significance in terms of p-values of the statistical tests between cluster 0 and cluster 1, cluster 0 and cluster 2, cluster 1 and cluster 2 in the proposed pipeline: non-parametric/parametric tests on difference of averages and variances were performed for (non-normal/normal) numerical variables while tests on category independence were performed for categorical variables.}
    \label{tab:results_gPCA}
\end{table}

\subsubsection{Comparison with State-of-the-Art methods}
\label{sec:soa-comparison}

The established radiomics frameworks contemplate the extraction of texture features from a single lesion, often located on the prostate where the bigger lesion or the primary tumor are found. Such features are usually fed into a classification or stratification model as to predict cancer diagnosis, staging and prognosis.

As a comparison with the state of the art, we investigated the stratification resulting from the analysis of the biggest lesions' textural description. 
We selected the bigger lesion of each patient, we reduced the texture vector dimensionality according to view-aware PCA dimensionality reduction procedures and we performed hierarchical clustering on the patient-to-patient Euclidean distance matrix with \textit{ward} linkage.
The clustering procedure lead to the stratification of patients into two groups, namely group 0 and group 1. 
It is worth noting that this clustering approach - based only on the bigger lesion and/or primary tumor - share some extent of the stratification underpinnings of the tree-based clustering. For the sake of clarity, we refer to one-lesion clustering as \textit{tumor clustering} and to tree-based clustering as \textit{heterogeneity clustering}.
In particular, tumor clustering resulted to have a mild concordance with heterogeneity clustering (Rand Index $=0.43$ \cite{chacon2021close}). Coherently, the tumor-based stratification leads to clinical significance. 
Tumor clustering pipeline discriminated between patients with different GS ($p_{m/d}=0.0259$), number of lesions ($p_{m/d} = 0.0001$), oligo/multi-metastatic disease proportions ($p_{ind}=0.0191$), PSA ($p_{m/d}=0.0339$), ongoing therapy ($p_{ind}=0.0847$) and total volume ($p_{m/d} < 0.0001$). However, $\Delta PSA_{1,0}$ ($p_{m/d}=0.2942$), $\Delta PSA_{2,0}$ ($p_{m/d}=0.2920$), proportion of patients exhibiting bone disease ($p_{m/d}=0.5220$), combination of therapy ($p_{ind}=0.3698$) and response to therapy ($p_{ind}=0.2170$) did not result significant in tumor clustering pipeline.
These findings were somehow expected. In fact, therapeutic guidelines are mainly taken on the basis of the characterization of the primary tumor. Accordingly, these results confirm the role of the primary tumor in acting as a driver for tumor heterogeneity and enforce radiomics role in the clinical treatment planning.
Nevertheless, despite the coherence with qualitative clinical investigation, tumor-based stratification does not translate into a risk assessment and prediction. In fact, the Kaplan Meier curve, describing the probability of response to treatment of the two groups, appear almost superimposed ($p = 0.85$) and do not reveal any prognostic mechanism of the clustering.

As a step forward from one-lesion strategy, radiomics literature suggests to average radiomic descriptions of peer lesions belonging to a patient, as to obtain one single vector. 
Such vector-based representation plays for the mean imaging phenotype of all lesions expressed by a patient, taking into account the variability of the imaging profiles. Such method provide an information-complexity trade-off between one-lesion strategy and the tree-based patient representation we propose.
Under these considerations, we performed patient-wise weighting of lesions' vectors, implemented the view-aware PCA dimensionality reduction methods and computed vector-based representation of each patient.
The pipeline grouped all the patients in one cluster, although one patient with higher PSA was clustered separately from the rest of the cohort population as to meet hyperparameter criteria (e.g. minimum number of clusters at least equal to 2). 
Clear stratification was indeed not achieved in this setting, however a particularly bad-prognosis patient detached from the main group.
From these findings, it follows that vector-based representation model did not lead to clear and solid results in our dataset, suggesting the non robustness of the lesions’ weighting procedures. 

\section{Discussion}
\label{sec:discussion}

Current radiomic framework presents some limitations, including the inter-operator variability in imaging acquisition settings, the relatively small sample sizes bounding the performance of supervised approaches, the lack of standardization, the high dimensionality and the collinearity of radiomics variables as well as the absence of a clinical interpretation for features \cite{smith2019radiomics}.
For these reasons, intra-patient tumor heterogeneity quantification has long been attempted with poorer results, hampering its embedding into daily practice. 
In this work, we propose a patient representation for agnostic multi-lesion cancer description, able to overcome intrisinc limitations of radiomics. The method exploits the texture analysis of lesions' imaging according to the radiomic workflow, overcoming features redundancy with PCA-based dimensionality reduction strategies. The proposed dendrogram representation results \textit{agnostic} with respect to acquisition settings and operator variability as it is built upon evolutionary and statistical relationship within peer lesions' descriptions. Moreover, the small sample size issue is tackled by the employment of unsupervised methods.
As to leverage the complex representation for stratification purposes, a suitable distance between dendrograms was required. 
Indeed, the pruned tree edit distance was specifically designed for heterogeneity-based hierarchical dendrograms and was the keystone to deliver a stratification policy based on agnostic disease conformations.

For what dimensionality reduction is concerned, view-aware PCA was hereby proposed as a scalable yet minor novelty. PCA is a well known, established, interpretable technique, often producing good results.  With the rationale of improving the scalability of the approach, we introduced the view-aware PCA, such that the dimensionality reduction step could be performed in parallel in smaller and semantically-similar euclidean spaces.
Nevertheless, further studies could investigate the sensitivity of the representation model and the performance of the clustering policy when employing feature transformation methods that capture non-linear dependencies of across- and within-view features \cite{li2018review}.

Compared to state-of-the-art disease representation, our approach shapes an exhaustive representation of intra-patient heterogeneity and devises an informed patient stratification. In fact, it leads to a more complex yet low-processed modelling of cancer disease, underlining interactions and relationships between lesions of individuals from which to infer prognostic knowledge.
Clearly, one-lesion strategy did not provide a quantification of lesions’ diverse phenotypes within a patient, as it only relies on the primary tumor. Nevertheless, tumor clustering lead to a coherent stratification with respect to the current clinical biomarkers, i.e., PSA, GS and oligo/multi-metastatic status. However, such clinically-informed stratification did not reach a significance in terms of prognostic power, bringing out the limitation of current clinical and radiomic-based biomarkers for treatment and prognosis.
Interestingly, the proposed representation brings out a comprehensive way to capture tumor biology and heterogeneity, revealing a deeper appreciation of the disease than a single lesion or the primary tumor alone.
On the other hand, the vector-based representation was confirmed insufficient to properly embed the patient's complexity of information. In fact, mean radiomic profile seems not to properly capture intra-tumor variability while it overlooks the primary tumor information entailing clinical information.
In both cases - when only the primary tumor is considered and when the mean radiomic profile of lesions is computed - state of the art methods failed in perspectively stratifying patients.

Beside descriptive and prognostic purposes, the proposed tree-based representation and stratification of tumor heterogeneity permits an exhaustive comparison between the role played by the primary lesion and its involvement into phenotypic selection mechanism. This is worth to be drawn and further investigated from a tumor heterogeneity and prognostic point of view. 
In fact, tumor clustering showed a latent agreement with heterogeneity clustering, suggesting the reliability of the current clinical practice in assessing intra-tumor characterization from primary lesions. Accordingly, primary tumor information seems to be more informative than intra-patient mean lesions’ profiles. If used in combination with dissemination indexes - such as number of metastases, dispersion of intra-patient lesions’ radiomic profiles and number of involved organs -, primary tumor characterization could provide enough information to support therapeutic decisions when an exhaustive assessment of tumor metastases results too expensive.

On note, heterogeneity clustering highlighted milder significance for what GS biomarkers is concerned with respect to tumor clustering. Pertinently, although GS is a solid clinical prognostic factor driving therapy planning, it represents the histo-pathological analysis for characterizing primary and secondary tumor biology at molecular level. Accordingly, the aggregated value, that is the sum of primary differentiation pattern and secondary differentiation pattern, do not entail heterogeneity information. For instance, studies using surrogate PCa end points have suggested that outcomes for GS 7 cancers vary according to the predominance of pattern 4. PCa mortality, biochemical progression and development of metastases differ for 3 + 4 and 4 + 3 tumors \cite{stark2009gleason}. This means that, according to tree-based representation, patients tagged with a GS 7 may still be clustered in different prognostic groups and alter the tests on averages. For these reasons, GS should not be considered as a solid ground truth for a perspective model, rather it conveys only a association between radiomic-based heterogeneity assessment and its biological counterpart, that is tumor microscopic appearance.
On the other hand, PSA and $\Delta PSA$ values significantly supported the predictive power of imaging-based representation in terms of cancer progression and disease free survival. Consistently, a decrease in PSA levels after treatment regimens was associated to therapy response.
In this sense, exhaustive lesions’ texture assessment and imaging-based heterogeneity quantification devise cancer subtypes that correlates with prognosis beyond clinical surrogates, eventually supporting treatment planning.

Basing on our and literature findings, the systematic digital tissue collection and its analysis should be enforced in the translational research of tumor disease and in the developing of targeted therapies. The debate around the therapeutic exploitation of imaging biomarkers for intra-tumor heterogeneity is nowadays on the cutting edge of medicine literature and it interlaces with other science field such as mathematics and geometry. This dynamic interplay between disciplines may provide a propitious route to ultimately attempt to limit tumor progression and treatment resistance.
Stemming from this work, future research could consider longitudinal evolution of heterogeneity-based representation objects and, accordingly, investigate the course of the disease over time in a non invasive way.

\section{Methods}
\label{sec:methods}

In this section we outline the steps involved in the proposed methodological pipeline. In particular, methods for radiomics-based representation of patients’ heterogeneity and its stratification are discussed. 
We present the challenges of analyzing a general radiomic dataset proposing an insightful dimensionality reduction approach (M1). Representation strategy is then deduced and described (M2). 
We then introduce an existing edit distance for comparing tree objects, on which we build the proposed metrics. It follows the derivation of an \textit{ad hoc} metric (M3) for capturing intra-tumor heterogeneity variability and computing the similarity matrix between patients on which to perform the stratification according to hierarchical clustering.

\subsection{M1: Dimensionality reduction}
\label{sec:dim_reduction}

As previously introduced, radiomic features are regarded as a high dimensional vector embedding of the VOI, providing a non-invasive assessment of tumor appearance from routinely acquired imaging studies. Several softwares, e.g. LifeX software, allow to extract several texture indexes from VOIs, according to the formulas provided by the software documentation (www.lifexsoft.org).
Considerable efforts have been devoted to link biological meaning with texture descriptors.
So far, little evidence of tight correlation between the two has been found, preventing from univocally define tumor inherent heterogeneity of lesions. However, different textural features have been proposed and reviewed by Castellano et al. \cite{castellano2004texture} as measures of tumor-specific intra-lesion heterogeneity. Indeed, radiomics analysis is widely assumed to entail all the information needed for a definition of lesion heterogeneity \cite{eary2008spatial, chaddad2018predicting}.

When managing a radiomic dataset, several challenges come across, above all high dimensionality and collinearity between features. Thus, prior to pairwise distance computation, lesions’ radiomic vectors need to be properly reduced as to selectively bring out relevant information. 

According to Nioche et al.\cite{nioche2018lifex}, radiomic features divide into six semantic groups of different methodological levels of texture analysis.
\textit{First order statistics} are the statistical moments of the grey level distribution extracted from the VOI under analysis. \textit{Shape features} describe morphological characteristics of the tumor.
The \textit{Grey Level Co-occurrence matrix} (GLCM) describes the co-occurrence of pairs of grey values in the VOI at a given distance $\delta$ (offset), usually set to 1, towards thirteen different directions. 
The \textit{Grey Level Run Length matrix} (GLRLM) describes the length of homogeneous \textit{runs} for each grey level, averaged across thirteen directions. 
Similarly, the \textit{Grey Level Zone Length matrix} (GLZLM) provides information on the size of homogeneous \textit{zones} for each grey level, averaged across three dimensions. 
Finally, the \textit{Neighbour Grey Level Difference matrix} (NGLDM) corresponds to the difference of grey levels between one voxel and its twenty-six neighbors in three dimensions. From each of these groups, several indices are extracted, exhibiting a multi-view intrinsic structure that induces intra- and inter-group correlation patterns. Accordingly, such vectors disclose high collinearity between their elements that needs to be properly managed. To overcome this, we leverage the very basic idea of multi-view learning and dimensionality reduction approaches: the view-wise linear combination of features \cite{kettenring1971canonical}.
We propose to separately apply the PCA to each of the radiomic groups, as to exploit the multi-view nature of the radiomic vectors. In this way, we may keep the information carried by each group well discerned, as it is methodologically extracted in different ways. A more interpretable dimensionality reduction comes from the process.

Upon pre-processing, namely missing values imputation and Z-transform normalization of radiomic variables, we thus perform this novel dimensionality reduction, namely \textit{view-aware PCA}.

As depicted in Figure \ref{fig:pipeline}, features are grouped according to the six semantic group - \textit{view} - as described above.
Within each group, PCA is performed and two principal components are retained from the scores of each PCA, resulting in different percentages of explained variability.
The process yields a total of twelve principal components, including six orthogonal pairs of linear combinations of original features. It follows that each lesion is described by a twelve-dimensional vector entailing view-wise texture information.

Further, we build the patient representation upon the such reduced radiomic vectors of peer lesions.

\subsection{M2: Tree-based patient representation}
\label{sec:pat_representation}

To exhaustively represent patients' disease in terms of tumor heterogeneity, relationships between lesions needs to be learnt from data.
Distance between texture descriptors could be an appropriate surrogate. Specifically, radiomic variables of a lesion - possibly after dimensionality reduction as in M1 - define a lesion-specific point in an Euclidean space. All lesions belonging to the same patient form a point cloud in $\mathbb{R}^p$, with a number of points $n_i$ equal to the number of patient’s tumor lesions and $p$ being the number of radiomic variables.

Although some frameworks are available to compare point clouds via discrete transport \cite{memoli2008gromov} \cite{nguyen2021point}, interpretability is often limited by the high dimensionality of the embedding space. 
Also, model based approaches, which capture the variability of cloud-generating processes by means of interpretable parameters, require a high number of observations in each point cloud to produce reliable estimations \citep{ghosal@2017nonparam}.

A more insightful approach would be to transform the point cloud into a proper summary, i.e., a representation, equally informative and easily readable. Pertinently, hierarchical clustering dendrograms have been extensively studied in the last decades as they unveil the intrinsic relationship among points of a point cloud (for a review on hierarchical clustering dendrograms see \cite{dendro_1}).
In our setting, the rationale behind hierarchical clustering stems from the need to quantify to which extent lesions, i.e., their radiomic vectors, are similar within patients and how they get agglomerated, hierarchically, one to each other. 
A dendrogram is obtained in such a way that lesions are linked in terms of evolutionary relationship, based on similarities in their imaging characteristics. Fig. \ref{fig:tree} graphically describes the process while Appendix \ref{sec:build_dendro} formalizes the mathematical steps involved.
Dendrograms' structure reflects the homogeneity between points of the point cloud. For instance, Fig. \ref{fig:mu} presents three dendrograms: the blue one describes a condensed point cloud, the green one presents a scattered point cloud while the orange tree denotes a hybrid situation.

To build hierarchical clustering dendrograms, a similarity measure is needed together with an agglomerative criterion - also known as \emph{linkage} - that best suit the structure of the data and the aim of the analysis. In our setting, an appropriate similarity measure is the Euclidean distance between lesions’ radiomic vectors, as suggested by Cavinato et al. \cite{cavinato2020pet}. Additionally, \textit{average} linkage is employed as it is known to be less sensitive to outliers, producing a more robust representation \cite{jarman2020hierarchical}.

\begin{figure}[t]
    \includegraphics[width=\textwidth]{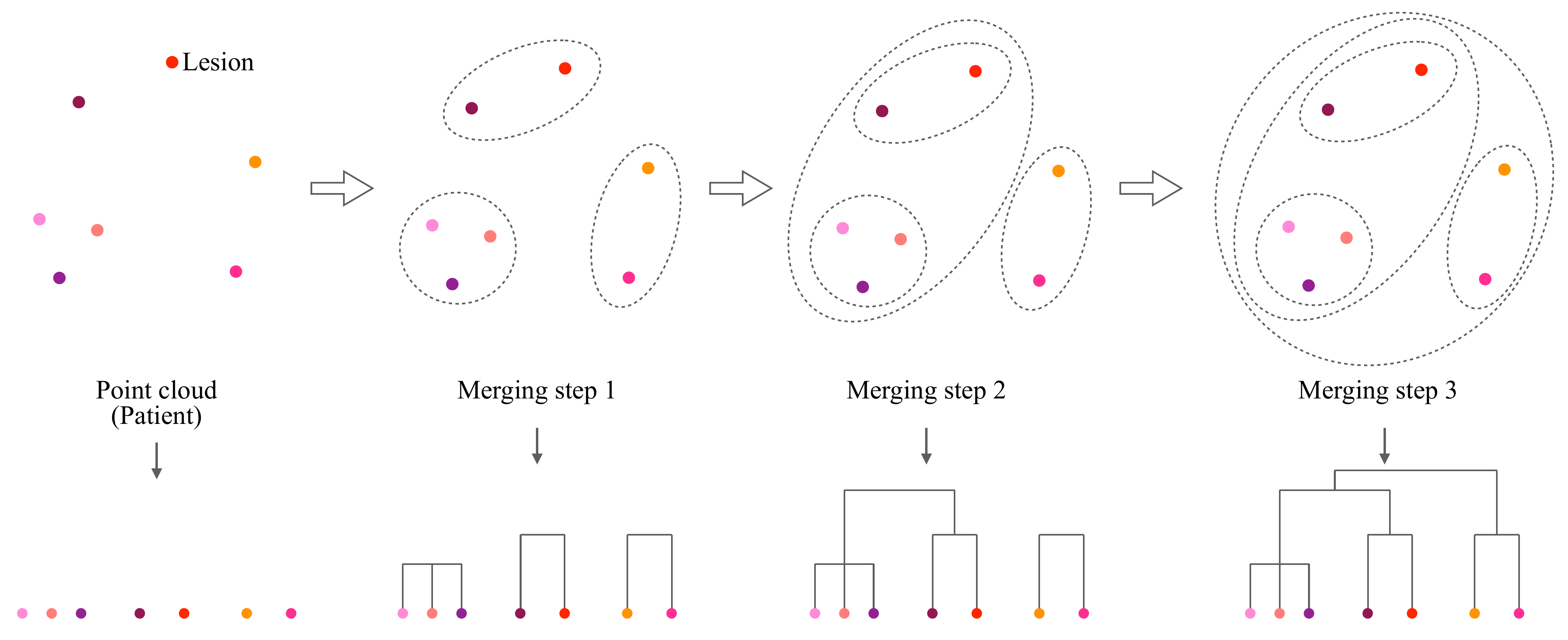}
    \caption{Tree-based patient representation via agglomerative hierarchical clustering: from the bottom up to the root, leaves get agglomerated and merged into bigger and bigger clusters, to finally converge in a single set. As a consequence, tree branches reflect pairwise similarity between lesions and the tree structure surrogates the overall dispersion among peer lesions. In the final dendrogram representation, leaves are the lesions of the patient and edges illustrate the similarity-connection between them. Leaves that are close to each other are intended by construction to be similar and exhibit a comparable radiomic profile (homogeneous) while distant leaves can be thought as lesions expressing different imaging phenotypes (heterogeneous). In this sense, dendrogram structure entails the heterogeneity quantification within the tumor, which needs to be exploited for heterogeneity-based stratification of patients. For mathematical formulation see Appendix \ref{sec:build_dendro}.}
    \label{fig:tree} 
\end{figure}

\subsection{M3: A novel Heterogeneity-based distance}

\begin{figure}[h!]
    \includegraphics[width=\textwidth]{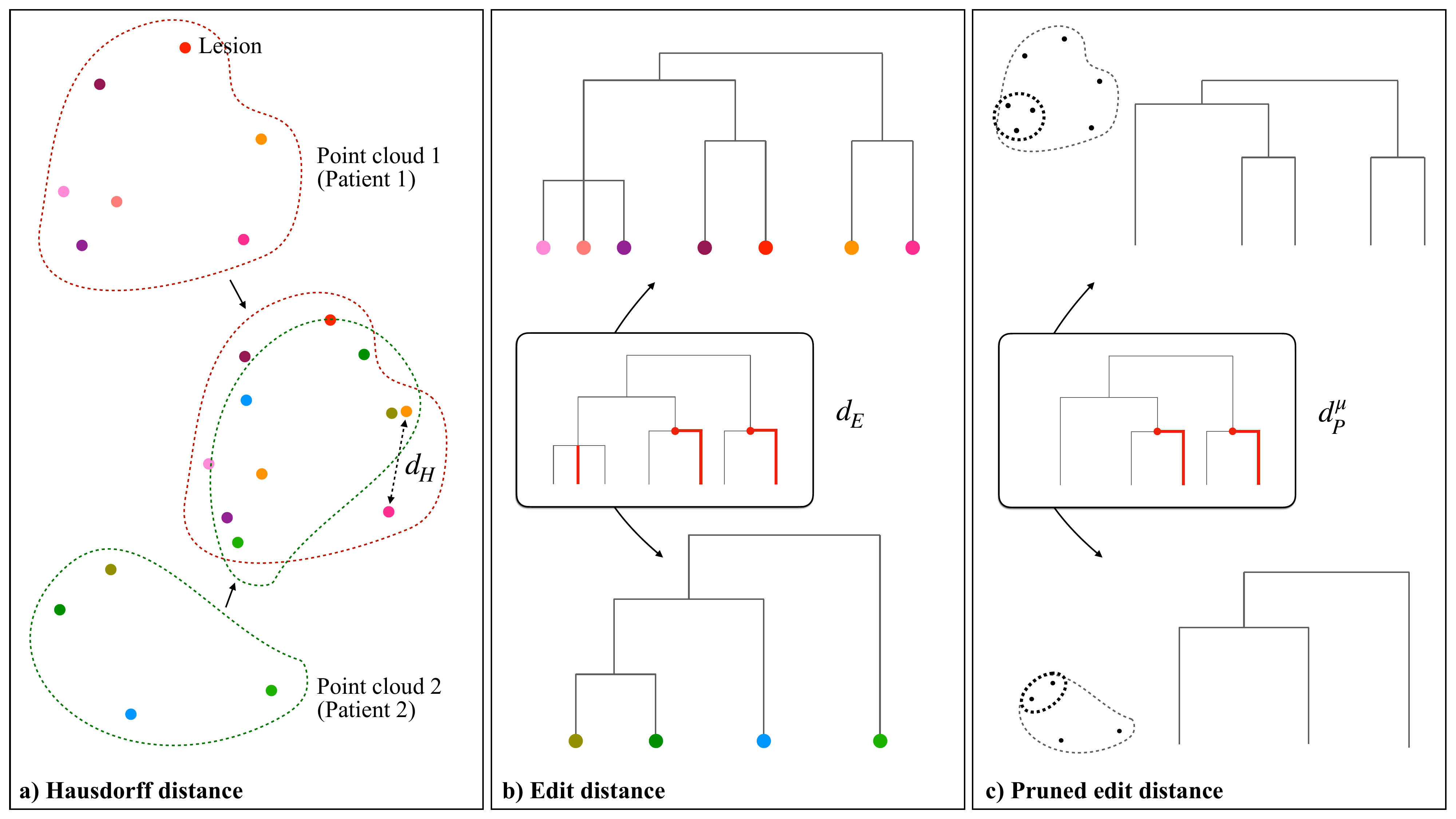}
    \caption{Continuity among metrics. a) Housdorff distance between two point clouds: the point clouds get overlapped and $d_H$ is defined as the maximum distance between the two maximally distant points; Hausdorff-closeness reflects the similarity in the spreading of points of two point clouds throughout the space. Specifically in the radiomic space, such spreading entails the quantification of inter-patient heterogeneity. This means that Hausdorff-close point clouds, i.e., patients' sets of lesions, have similar intra-patient heterogeneity characterization and thus should be regarded as similar by the metric we employ for dendrograms; b) Tree edit distance between hierarchical clustering dendrograms: the distance is given by the sum of the costs of the minimum number of modifications needed for transforming a tree into the other. Modifications include positive/negative shrinking, deletion/insertion and ghosting/splitting. 
    The {\em shrinking} edit multiplies the weight value of an edge with a positive factor, which can either lengthen (positive shrinking) of shorten (negative shrinking) the original edge weight. The cost of shrinking an edge is equal to the absolute value of the difference between the initial and the final weights.
    {\em Deleting} or {\em inserting} an edge $(v_1,v_2)$ removes or introduces a branch at a given height, altering the children-father structure of the tree. For any deletion/insertion, the cost is equal to the weight of the edge deleted/inserted.
    Finally, the {\em ghosting} edit eliminates a vertex \(v\) that connects only two adjacent edges (order $2$ vertex) such as one new edge results from the sum of the two former edges. The opposite edit is the {\em splitting}. Ghosting and splitting have no cost, therefore order $2$ vertices are \textit{de facto} irrelevant when computing the cost of an edit path;
    c) Pruned tree edit distance between pruned dendrograms: pruning removes leaves with weights $\leq \varepsilon$, eventually aggregating homogeneous phenotypes. The operator $P_\varepsilon$ thus gradually discard intra-patient homogeneity, disclosing only the heterogeneous - independent - tumor phenotypes.
    Of note, $d_P^{\mu}$ is different from $d_E$ since the pruning modulates the effect of cardinality on the distance computation by removing redundant edges of the tree and compressing tree dimensionality.}
    \label{fig:distances} 
\end{figure}

After having obtained patient representation, we proceed to defining a distance between dendrograms, which can properly reflect the affinity between patients in terms of tree conformations as manifestation of intra-tumor heterogeneity. 
A suitable metric should meet some requirements in order to produce effective results: (1) the comparison between dendrograms should reflect the properties of the point cloud they stem from: if two point clouds are close in terms of sparsity and conformation, we require the associated dendrograms to be close as well. In other words, any metric between dendrograms must hold some \textit{continuity} properties with respect to the original point clouds comparison;
(2) the metrics should weight differently the homogeneous part of the tree structures and the heterogeneous ones. This means that distance has to be evaluated as a trade-off between the extents of homogeneity and heterogeneity exhibited by the lesions of different patients. 

\subsubsection{Edit distance}
\label{sec:edit-dtstance}

Dendrograms are \emph{unlabelled} object which, in our context, may have a different number of leaves and do not hold any a-priori correspondence between the leaves in different objects.

The literature dealing with the comparison of dendrograms is reviewed in Appendix \ref{sec:distances}, where we detail the limitations that prevent us from employing existing distances in our context.
Recently, Pegoraro et al. \cite{pegoraro2021metric, pegoraro2021functional} proposed a novel distance for merge trees. 
Following the authors, we call this metric \emph{edit distance} for merge trees and indicate it with $d_E$. The metric $d_E$ is defined for weighted, rooted, unlabelled trees. 
As most of the metrics for unlabelled trees, its computational complexity has been shown to scale poorly with the number of leaves in the trees. However, it is particularly efficient for small-scale trees with respect to other metrics.
In our setting, trees present a number of leaves less or equal to the number of tumor lesions in a patient, that is a few dozens at most. Thus, we can run the comptuation of $d_E$ on general purpose machines, like personal computers.
Unlike other metrics, continuity properties are easily proven.
Moreover, $d_E$ is interpretable, easy to understand and to communicate.

As depicted in Fig. \ref{fig:distances}b), one tree $T$ can be modified and transformed into a different tree $T’$ by performing different sets of allowed modifications, each coming with its own cost (for details see Pegoraro et al. \cite{pegoraro2021metric}). 
The set of consequent edit operations which comes at the minimum cost is named the {\em optimal edit path} and represents the core of the edit distance between the two trees.
The distance $d_E$ is thus the total cost of the optimal edit path and is defined as:
\begin{equation}
    \centering
    d_E(T,T')=\inf_{\gamma\in\Gamma(T,T')} cost(\gamma)
\end{equation}
where $\Gamma(T,T')$ indicates all the possible edit paths which start in $T$ and ends in $T'$.
The algorithm for $d_E$ computation is exhaustively detailed in \cite{pegoraro2021metric}.
Through combinatorial objects called \emph{mappings}, it is shown that $d_E$ is a metric in the space of merge trees and that it can be computed with a Linear Integer Programming approach \cite{pegoraro2021metric}.

Upon these premises, we proceed to verify the two aforementioned conditions. Specifically, we prove the continuity property of $d_E$ (1) and propose a modification of $d_E$ as to meet the homogeneity-heterogeneity requirement (2).

\subsubsection{Continuity property of $d_E$} 
\label{sec:continuity}

As previously stated, the distance between dendrograms must hold continuity results with respect to the original point clouds comparison: under certain hypotheses, if two clouds are pointwise close, also their merge trees should be close with respect to $d_E$.
In Fig. \ref{fig:distances}a), we introduce the Hausdorff metric between point clouds (for formal definition see Appendix \ref{sec:proof-continuity}). It can be interpreted as a measure of the pointwise proximity between two point clouds and provide a comparison between the heterogeneity of two patients' diseases. In Appendix \ref{sec:proof-continuity}, we prove that Hausdorff-closeness for point clouds implies Edit-closeness for the associated dendrogram objects, i.e., multi-lesion patients representation. 

\subsubsection{Homogeneity-heterogeneity trade-off} 

In the edit distance $d_E$, the distance values are strongly dependent on the clouds cardinalities, meaning that pairs of point clouds with higher cardinalities tend to be farther apart from pairs of point clouds with smaller cardinalities. At first sight, such assumption sounds reasonable for stratification purposes. In fact, patients with multiple lesions are known to exhibit a more severe disease than patients with fewer lesions, as the spreading of the tumor entails prognostic power.
Still, the mere counting of lesions lacks of robustness in perspective studies and, in this context, may overshadow the variability between hierarchical dendrograms induced by intra-patient heterogeneity.
For this reason, we propose a modification of the metric $d_E$ as to mitigate cardinality issue.

\subsubsection{Pruned edit distance}
\label{sec:pruned_dist}

The kind of variability we are interested in is the one induced by patient-wise heterogeneity between lesions. By construction of the dendrogram representation, two lesions of a patient are heterogeneous - in terms of radiomic/imaging description - according to the length of the dendrogram branches connecting them. The longer the branches, the higher the inter-lesions heterogeneity and, viceversa, the shorter the branches the more homogeneous the patient’s disease phenotypes. Accordingly, we may want to modulate the extent to which we consider edit costs according to branch length.
In particular, we may want to induce edits applied on small edges to contribute less to the final distance than bigger edges, which we deem more relevant for stratification purposes.

\begin{figure}[h!]
	\centering
	\includegraphics[width = \textwidth]{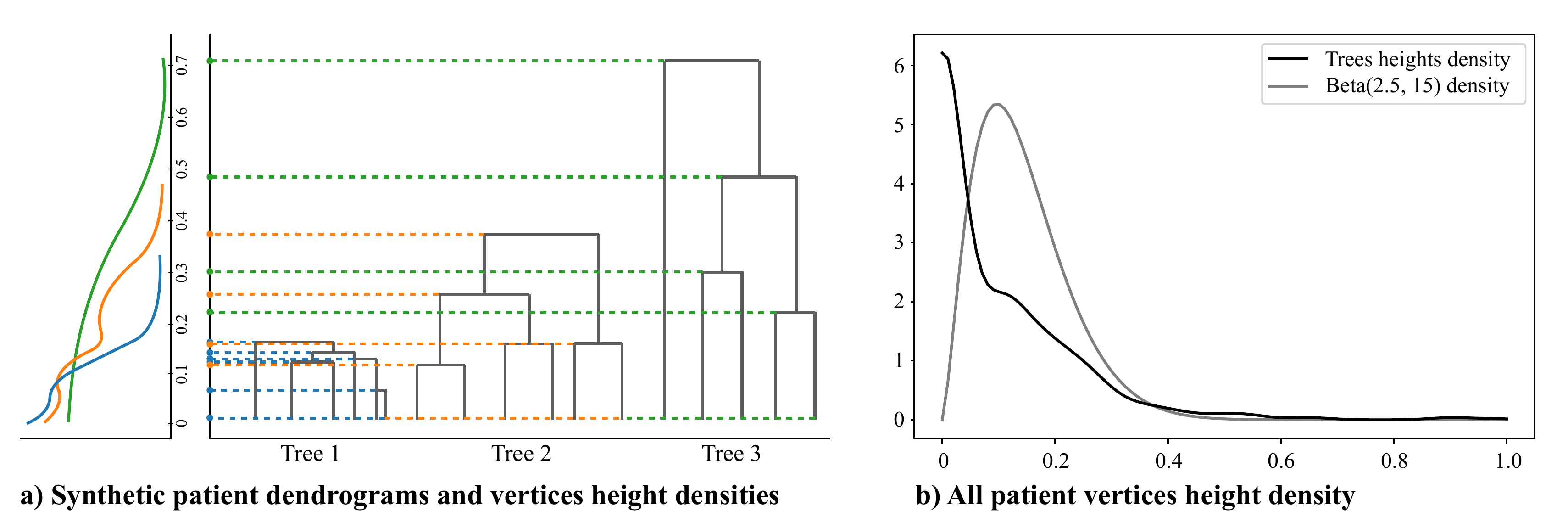}
	\caption{Choice of $\mu$: a) costruction of qualitative densities of the vertices heights in three example dendrograms: the velocity with which leaves get merged in a dendrogram, i.e., edges length variability, reflects the heterogeneity characterization of lesions. Per every dendrogram, branches heights (rescaled on $[0,1]$ dividing by the highest value)  are annoted on the left and their associated density is inspected. The vertices heights of a patient exhibiting homogeneous lesions concentrates in a small real interval $[0,a]$ - with $a>0$ (blue tree); the vertices heights of a patient exhibiting heterogeneous lesions spread in a range of values far from zero $[a,b]$, with $a,b>0$ (green tree); a patient showing groups of homogeneous lesions, the one heterogeneous to the others, is associated to a dendrogram with an explicit clustering structure with clusters with multiple close leaves (orange tree). The vertices heights distribution displays two components, reflecting both the homogeneity of similar lesions - with values close to $0$ - and the heterogeneity of dissimilar clusters - with values far from $0$;
	b) $\mu$ provides the coefficients with which to weight the different pruning cutoffs $\varepsilon$, to neglect the homogeneity within clusters of similar lesions’ phenotypes and bring out the informative heterogeneity between different phenotypes. To efficient the computation, a parametric shape of $\mu$ is used and empirical heights distributions of all patients (black line) is exploited to model the distribution. In the population heights distribution, we discern both homogeneous and heterogeneous phenotypes. The two components are demarked with a saddle point on $0.15$. Accordingly, low weights of $\mu$ should be associated to $\varepsilon \ll 0.15$ and $\varepsilon \gg 0.15$ and high weights to $\varepsilon \simeq 0.15$. 
	In fact, low $\varepsilon$ values entail pure homogeneity information while high $\varepsilon$ values would lead to discarding useful heterogeneity information. We thus infer to model $\mu$ as an asymmetric bell-shaped density function with one peak centered in the saddle point of the heights distribution. The Beta family of distributions, supported in $[0,1]$, well meets the requirements; it simplifies both the numeric integration procedure and the results' interpretation. The Beta-shaped $\mu$ is centered on $0.15$ (grey line), properly tuning $\alpha$ and $\beta$ shape parameters ($\alpha=2.5, \beta=15$).
    }
    \label{fig:mu}
\end{figure}

We introduce the pruning operator $P_\varepsilon$ as regularization strategy, which deletes leaves associated with edges whose weights are so small that one may want to neglect them in the analysis of heterogeneity. 
Given a threshold $\varepsilon$, we consider for deletion all leaves whose father-child edge has weight $\leq \varepsilon$.
However, when two or more of candidate leaves share the same father, i.e. they are \emph{siblings}, we delete all the leaves but the one with the bigger weight. 
Moreover, if the weights of the siblings are equal, as it is often the case in clustering dendrograms, we randomly choose to keep one of them, delete the other(s) and, eventually, \textit{ghost} their father (see Fig. \ref{fig:distances} for meaning of ghosting). 
This pruning operation is recursively iterated until no leaves with small edges can be found.
To note, removing only one leaf in case of two small-weight siblings is equivalent to considering the two leaves as clustered together from the ``beginning'' in the hierarchical clustering procedure. Accordingly, siblings leaves (lesions) entail phenotype expressions so similar to be considered as one single imaging phenotype.
In this way, the pruned tree displays the number of \textit{different} phenotypes coexisting in the patient instead of the mere number of lesions. 
Fig. \ref{fig:distances}c) displays the edits needed for transforming a pruned tree into another, whose costs determine the pruned edit distance.

Operationally speaking, the ``correct'' value of $\varepsilon$ is a-priori unknown and needs to be tuned with a complexity-information trade-off. To enhance the robustness of this parameter choice, we take the weighted average of the distances between two trees pruned with all the possible values of $\varepsilon$.
Accordingly, the definition of \emph{pruned edit distance} for general merge trees develops as follows. Given two merge trees $T$ and $T'$, the pruned edit distance is:
\begin{equation}
    \centering
    d^\mu_P(T,T'):=\int_{\mathbb{R}} d_E(P_\varepsilon(T),P_\varepsilon(T'))d\mu(\varepsilon)= \mathbb{E}_{\varepsilon\sim\mu}[ d_E(P_\varepsilon(T),P_\varepsilon(T')]
\end{equation}
where $\mu$ is a finite measure on $\mathbb{R}$ which provides the weighting strategy across different values of $\varepsilon$ in order to compute a weighted average among trees distances. The higher the mass $\mu$ associated to an interval $[a,b]$, the bigger the contribution to the final result of the tree distance according to $\varepsilon \in [a,b]$.
In other words, the measure $\mu$ allows to control the contribution to the final distance of branches with weight below $\varepsilon$, which are indeed homogeneous enough to be removed.  
Fig. \ref{fig:mu} elucidates the choice of $\mu$ tuned on case study data.
Note that if we have a sequence of weakly converging probability measures $\mu_n \rightharpoonup \mu $, then $d^{\mu_n}_P(T,T') \rightarrow d^{\mu}_P(T,T')$. This implies that the proposed distance is robust with respect to the choice of $\mu$: similar measures $\mu$ (in the sense of weak convergence) would give similar distances.

To assess the different behaviours between $d_E$ and $d^\mu_P$ and the extent to which $d_P^\mu$ is suitable for our purposes, in Appendix \ref{sec:d_mu_simulation} we present a detailed simulation study.
Moreover, we can prove that, under general conditions on $\mu$, $d_P^\mu$ is still a metric (for proof see Appendix \ref{sec:proof-metric}). 

\newpage

\noindent \textbf{APPENDIX}

\appendix

\section{Patients' personal information summary}
\label{patient_imformation}

Tables \ref{tab:summary_continuous} and \ref{tab:summary_categorical} summarize the patients' population.

\begin{table}[h!]
\centering
    \begin{tabular}{lllll}
    \toprule 
    Variable & Mean &  Std. dev. & Median & Range \\
    \midrule
    \rowcolor{black!10} Age & 72.09  & 7.03   & 71.68 & 54.88 – 85.24 \\
    Total volume  &  16.41 &  34.72   & 3.16  &  0.22 – 207.70 \\
    \rowcolor{black!10} Gleason Score & 7.73 &  1.03  & 7.00 & 5.00 – 9.00 \\
    PSA & 18.16 &  70.96   & 2.66  &  0.09 – 591.00 \\
    \bottomrule
    \end{tabular}
\caption{Statistical summary of patients' personal information (continuous variables).}
\label{tab:summary_continuous}
\end{table}

\begin{table}[h!]
\centering
    \begin{tabular}{lll}
    \toprule 
    Variable & & Number of patients (\%) \\
    \midrule
    \rowcolor{black!10} Number of metastases  &  Oligo ($<$3) & 38 (41.3\%) \\
    \rowcolor{black!10} &  Multi ($\geq$3) & 54 (58.7\%) \\ \hline
    \rowcolor{black!10} &  Oligo ($<$5)  & 60 (65.22\%) \\
    \rowcolor{black!10} &  Multi ($\geq$5) & 32 (34.78\%) \\ \hline
    \rowcolor{black!10} &  Intermediate (3$\leq$n$<$5) &  22 (23.92\%) \\
    Gleason Score (dichotomous) & $<$7 & 8 (8.7\%) \\
    &  $=$7 & 45 (48.91\%) \\
    &  $>$7 & 31 (33.69\%) \\
    &  missing & 8 (8.7\%) \\
    \rowcolor{black!10} Ongoing therapy & Y  & 33 (35.87\%) \\
    \rowcolor{black!10} &  N  & 59 (64.13\%) \\
    Initial therapy & RP & 23 (25\%) \\
    &  RP+RT &  52 (56.52\%) \\
    &  RT & 9 (9.78\%) \\
    &  missing & 8 (8.7\%) \\
    \rowcolor{black!10} PSA (dichotomous) &  $\leq$ 1.93 & 33 (35.87\%) \\
    \rowcolor{black!10} &  $>$1.93 &  48 (52.17\%) \\
    \rowcolor{black!10} &  missing & 11 (11.96\%) \\
    \bottomrule
    \end{tabular}
\caption{Statistical summary of patients' personal information (categorical variables).}
\label{tab:summary_categorical} 
\end{table}

\section{Distance metrics for trees: literature review}
\label{metrics_distance}

\subsection{Different Kinds of Trees}
\label{kindsoftrees}
Before reviewing the existing metrics for distances among tree objects, it is worth to list the different kinds of tree that have been defined throughout the years. We integrate the different definitions and approaches presented in literature under the light of this work’s objectives, by adopting the most appropriate definition for our purposes.

We start from the general definition of a \emph{tree structure} found in \cite{pegoraro2021metric} and \cite{pegoraro2021functional}. 

\begin{defi}
A tree structure $T$ is given by a set of vertices $V_T$ and a set of edges $E_T\subset V_T\times V_T$ which form a connected rooted acyclic graph.  We indicate the root of the tree with $r_T$. We say that \(T\) is finite if \(V_T\) is finite. The order of a vertex of \(T\) is the number of edges which have that vertex as one of the extremes. 
Any vertex with an edge connecting it to the root is its child and the root is its father: this is the first step of a recursion which defines the father-children relationship for all vertices in \(V_T.\)
The vertices with no children are called leaves or \textit{taxa}. The set of leaves is called $L_T$. Vertices which are not leaves are called internal and they are collected in the set $I_T$. The relation $father > child $ induces a partial order on $V_T$. The edges in $E_T$ are identified in the form of ordered couples $(a,b)$ with $a<b$.
A subtree of a vertex $v$ is the tree structure whose set of vertices is $\{x \in V_T| x\leq v\}$. 
\end{defi}

On top of this definition, tree structures primarily divide into unlabelled trees - also called tree-shapes - and labelled trees, depending on whether the set-related information contained in the leaves - also called labels - is considered or discarded. When dealing with unlabelled trees we want to work up to the following set of maps. 

\begin{defi}
Two tree structures $T$ and $T'$ are isomorphic if there exists a monotone bijection $\eta:V_T\rightarrow V_{T'}$ inducing a bijection between the edges sets $E_T$ and $E_{T'}$: $(v,v')\mapsto (\eta(v),\eta(v'))$. Such $\eta$ is an isomorphism of tree structures. 
\end{defi}

On the other hand, labelled trees are of interest in many applications, where are used to infer information about labels’ description. Thus, the comparison between trees has to be driven with regard to the labeled structure.

\begin{defi}
A label-preserving isomorphism between the tree structures $T$ and $T'$ is an isomoprhism of trees $\eta:V_T\rightarrow V_{T'}$ such that $\eta_{L_T}=id_{L_T}$. The term $id_A$ is the identity map on a set $A$. 
\end{defi}

Accordingly, we provide the following definitions.

\begin{defi}
An unlabelled tree or, equivalently, a tree shape, is the isomorphism class of a tree structure. 
A labeled tree is the label-preserving isomorphism class of a tree structure. Labelled phylogenetic trees and labelled hierarchical clustering dendrograms are names which are used instead of labeled trees in some precise scientific contexts.
\end{defi}

Beside unlabelled and labelled trees, there are some in-between structures which posses some additional ordering properties on the vertices. In particular: 

\begin{defi}
A ranked tree shape is a tree shape $T$ with a complete ordering of the internal vertices $I_T$. Similarly we may have ranked labeled trees. 
\end{defi}

A step forward in the analysis of tree objects is represented by weighted trees (or clustering dendrograms). Such structures entail information about both the tree structure and the length of the branches, which may carry some relevant insights for many applications.

\begin{defi}
A weighted tree shape is a tree shape $T$ along with a weight function $w_T:E_T\rightarrow \mathbb{R}_{>0}$. The weight value of a branch/edge is sometimes called length of the branch, due to its positive value. In some contexts such trees are also called genealogies.
\end{defi}

Unlabelled clustering dendrograms are a particular case of weighted tree shapes. To introduce them we need to formalize the following notation.  
Given a tree structure $T$ ad a vertex $v\in V_T$, we call $\zeta_v$ the set $\zeta_v=\{v'\in V_T|v\leq v' \leq r_T \}$. That is, $\zeta_v$ contains all the points between $v$ and the root $r_T$.

\begin{defi}
A weighted tree shape $T$ is isochronously sampled - or, equivalently, is a clustering dendrogram - if for any couple of leaves $(l$, $l')$ we have $\sum_{v\in\zeta_l}w_T(v)=\sum_{v'\in\zeta_{l'}}w_T(v')$. This means that the leaves are all at the same distance from the root.
If this does not happen, the tree shape is said to be heterochronously sampled. 
\end{defi}

In this paper we focus our interest on isochronously sampled weighted tree shapes, since dendrograms obtained from hierarchical clustering are indeed  isochronously sampled. For this reason, we use the word (hierarchical clustering) dendrogram as to identify an isochronously sampled weighted tree shape.
Nevertheless, the result in Section \ref{sec:proof-metric} holds also for heterochronously sampled trees.

\subsection{Distance Metrics between Trees}
\label{sec:distances}
As previously stated, dendrograms are \emph{unlabelled} object which, in our context, may have a different number of leaves and do not hold any a-priori correspondence between the leaves in different objects.
The literature dealing with the comparison of dendrograms divide in two macro-areas, including (1) metrics defined for clustering dendrograms and (2) metrics designed for \emph{merge trees}.
The first family of metrics mainly deals with \emph{labeled} trees as byproducts of a hierarchical clustering algorithm. We refer to Flesia et al. \cite{flesia2009unsupervised} for an exhaustive review of distance definitions. This kind of metrics are known to be heavily dependent on the graph structure of the dendrograms, leading to some limitations when comparing dendrograms with a different number of leaves. Moreover, theoretical continuity results with respect to dendrogram-associated point clouds are often lacking.
On the other hand, within topological data analysis, dendrograms are ofen referred as a particular case of merge trees, obtained when all the leaves of a merge tree lie at height $0$. This allows to transfer merge tree literature, the second family of metrics, to dendrogram analysis. 
In this Section we review the first family of metrics, detailing definitions and limitations of employing those distances in our context. Most of the metrics belonging to the second family, instead, shares one main drawback, namely the out of reach computational cost \cite{merge, bauer2020reeb, cardona2021universal}, which makes them unsuitable for our application. 
Besides, the metrics with more performing algorithms \cite{pont2021wasserstein, merge_farlocca} still lack the theoretical investigation to assess some practical properties, making them less worthy than others.


\subsubsection*{LAB}
One of the main points of interest in comparing trees is to interpret them as explaining the evolution of a fixed set of labels under some ``agglomerative'' criterion, being it a clustering criterion or a genetic evolution summary. For this reason, a lot of research focused on comparing labelled trees. The most notable examples of metrics for weighted labelled trees are the Robinson-Foulds metric \citep{robinson1979comparison} and the BHV metric \citep{billera2001geometry}.
A number of limitations of these metrics has been pointed out by \cite{smith2019bayesian} and \cite{smith2020information}. In particular, severe shortcomings prevent researchers from comparing weighted tree shapes with a variable number of leaves.

\subsubsection*{SHAPE}

Recently \cite{colijn2018metric} proposed a distance to compare tree shapes. Such metric is based on a numeric representation of tree shapes, obtained with a labeling related to the tree isomorphism algorithm. Then it produces vectors enriching this numeric information with indices based on frequencies of subtrees shape and other statistical summaries of the tree, including length-related information like spectral differences, Sackin or Colless imbalance, etc. The metric between trees is obtained as the Euclidean metric between these vectors. A key point for us is that the contributions of the part of the vector depending on the tree shape and the one obtained from the length of the edges are independent.
Accordingly, although this metrics shows good linearity and convexity properties, it reveals too sensitive to the underlying tree structure.

\subsubsection*{MAT}

\cite{kim2020distance} produces a metric to compare ranked tree shapes and ranked genealogies based on a matrix representation of ranked tree shapes. The dimension of such matrix is determined by the number of leaves possessed by the tree and thus to be coherently compared, two trees must possess the same number of leaves. In fact, the metric between trees is defined as some Euclidean metric between the corresponding matrices. This limitation makes this metric unsuitable for our purposes.

\subsubsection*{LAP}

A graph-oriented approach is pursued by \cite{lewitus2016characterizing}: the authors represents a tree shape (possibly weighted and heterochronously sampled) by means of its graph laplacian matrix. From such matrix the sequence of eigenvalues is extracted: eigenvalues are know to be heavily dependent on the graph connectivity and, in particular, on shortest-path lengths between vertices. Specifically, high eigenvalues arise from areas of the graph which have sparse nodes with long branches, low eigenvalues correspond to very dense regions of the graph (many nodes connected by short edges). To get a more versatile summary of the tree, this sequence is then smoothed and normalized, to obtain a spectral density profile. In order to compare tree shapes, such densities are compared. One drawback of this representation is that the ``operator'' which maps a tree shape into a density, has no guarantees to be injective. Moreover, any information about the rooted nature of the tree and the ordering structure of leaves is discarded, leading to poor results.

\subsubsection*{KER}

\cite{poon2013mapping} presented a kernel approach to measure similarities between tree shapes. The comparison proceeds as follows: the kernel looks for all possible \emph{subset trees} -  contiguous portions of (unweighted) subtrees - which are shared between the trees the kernel and adds up positive contributions for every shared subset tree, weighted by similarity between lengths. As a result, higher scores will be assigned to trees which share, locally, similar structures and with similar weights. 
However, it is to be noted that this is just a measure of similarity and does not provide a proper distance between trees. Moreover, the authors do not present a comparison with other tree metrics or similarities, nor enough information to grasp for which purpose their similarity measure is best suited, which kind of variability between trees it tends to capture and which possible pathologies presents. The authors state that their ``approach is similar to the Robinson-Foulds metric'' and thus it may suffer some of the severe shortcomings pointed out for such metric \citep{smith2019bayesian, smith2020information}.

\bigskip

An immediate observation that we can make is that most of the aforementioned metrics are not suitable for our purpose: we need to compare weighted tree shapes with possibly a different number of leaves. 
Apart from [SHAPE], [LAP] and [KER] all the others are discarded.
Note that any kind of metric defined for labelled trees can be extended to work with (weighted) tree shapes by trying all possible permutations of labels; but this approach is clearly computationally out of reach even for small trees.

There are also some reasons for which we discard the metric [LAP]. First, clustering dendrograms are intrinsically rooted objects, thus there is a well defined height where all the objects are clustered together. From another points of view, clusters enjoy a partial order relationship given by inclusion which is reflected by the rooted nature of hierarchical dendrograms. The [LAP] approach, on top of not being a proper distance, completely throws away this information. 

The metrics [KER] and [SHAPE], in addition to the shortcomings already pointed out in the previous lines, share the following drawback: they are very sensitive to the underlying tree-shapes. For instance, the value of the metric [SHAPE] cannot be arbitrarily close to zero if two tree shapes are different. 

\section{Building the vertices heights curve}
\label{sec:building_curves}

In Figure \ref{fig:trees_curves}, we explain in details how to build the curve displaying the filtered heights of one tree’ vertices.
In the plot there are three detailed examples showing how to obtain the curves in Figure 2a. The colors of the curves match the colors of the trees. For each tree and for every $h\in \mathbb{R}$,  the value $f(h)$ is equal to the number of vertices (highlighted with dots) which lie above $h$. For instance consider the blue dendrogram: we obtain a value of $3$ until we reach the height where two leaves merge, and then $1$ for a small interval of values, and lastly $0$ from the height of the root of the blue tree onwards.

\begin{figure}[h!]
\centering
\includegraphics[width=0.7\textwidth]{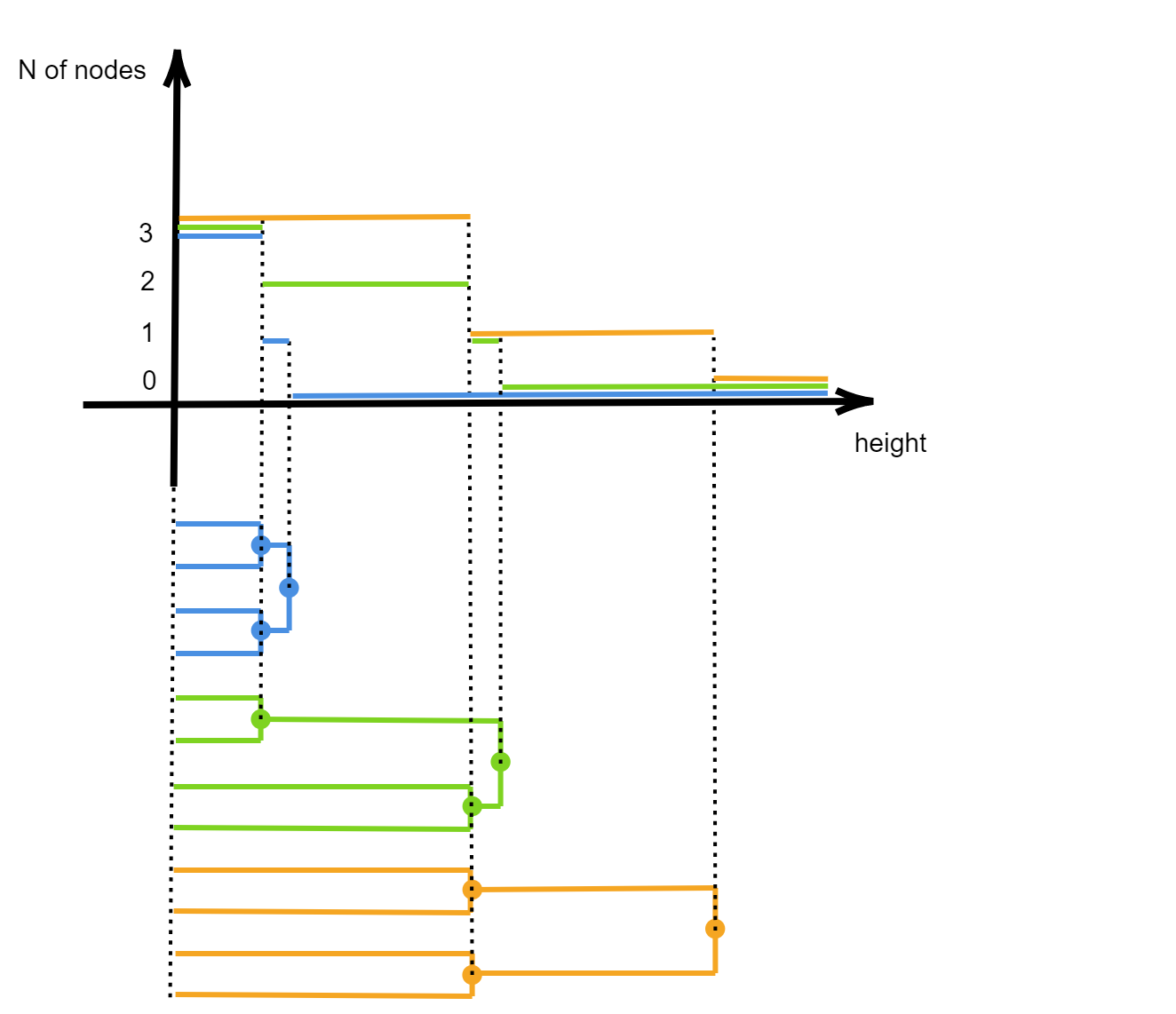}
\caption{Procedure for building the vertices heights curve of three example patient-tree.}
\label{fig:trees_curves} 
\end{figure}

\section{Dendrograms construction}
\label{sec:build_dendro}

In this section we present few technical definition that we need in order to describe the dendrogram representation we employ. We describe the procedure in the general case of having a finite metric space $(X,d)$ i.e. a finite set $\{x_1,\ldots,x_n\}$ with a metric $d:X\times X\rightarrow \mathbb{R}_{\geq 0}$ which is reflexive, symmetric and satisfies the triangular inequality.
In our case we work with $\{x_1,\ldots,x_n\}\subset \mathbb{R}^n$ and the Euclidean norm.

\begin{defi}
A tree structure $T$ is given by a finite set of vertices $V_T$ and set of edges $E_T\subset V_T\times V_T$ which form a connected rooted acyclic graph. The order of a vertex is the number of edges which have that vertex as one of the extremes. 
Any vertex with an edge connecting it to the root is its child and the root is its father.
In this way we recursively define father and children (possibly none) relationships for any vertex on the tree. The vertices with no children are called leaves and are collected in the set $L_T$, while the set of children of a vertex $x\in V_T$ is called $child(x)$. Similarly, the vertex $father(x)$ is the father of the vertex $x$. 

The relationship $father > child $ induces a partial order on $V_T$. The edges $E_T$ are given in the form of ordered couples $(a,b)$ with $a<b$.
For any vertex $v\in V_T$, $sub_T(v)$ is the subtree of $T$ rooted in $v$, that is the tree structure given by the set of vertices $v'\leq v$. If clear from the context we might omit the subscript $T$. 
\end{defi}

\noindent Now, to obtain a dendrogram we need to add some kind of length measure to a tree structure.

\begin{defi}
A merge tree $(T,f)$ is a finite tree structure T coupled with a monotone increasing function (with respect to partial ordering on $V_T$) \(f:V_T\rightarrow \mathbb{R}\). If $f(l)=0$ for all $l\in L_T$, then we say that the merge tree is a dendrogram. The function $f$ also defines a weight value for every edge $e=(v,father(v))$: $w_T(e)=f(father(v))-f(v)$.
\end{defi}

\noindent To build a hierarchical clustering dendrogram $T_C$ from a finite metric space $(C,d_C)$ we proceed as follows. With $K$ we indicate the set of clusters we are considering:
\begin{itemize}
    \item[(S0)] at the beginning $K=\{\{c\}\mid c\in C\}$, and every $c\in C$ is associated to a leaf $v_c\in V_{T_C}$ with $f(v_c)=0$;
    \item[(S1)] consider all the couples of clusters  $k_1,k_2\in K$ and we measure the distance $d(k_1,k_2)$ according to some \emph{linkage};
    \item[(S2)] pick $k,k'\in K$ such that $d(k,k')=\min_{k_i\in K;k_1\neq k_2} d(k_1,k_2)$ and add the vertex $v_{kk'}$ to $V_{T_C}$ with $f(v_{kk'})=d(k,k')$. Then remove $k$ and $k'$ from  $K$ and add $k\cup k'$ to $K$;
    \item[(S3)] start again from (S1) unless $K=C$.
\end{itemize}

\noindent The linkage determines the distance $d(k_1,k_2)$ between $k_1,k_2\subset C$ and the most common examples are:
\begin{itemize}
    \item single linkage: $d(k_1,k_2)=\min_{c_i\in k_i}d_C (c_1,c_2)$
    \item complete linkage: $d(k_1,k_2)=\max_{c_i\in k_i}d_C( c_1,c_2)$
    \item average linkage: $d(k_1,k_2)=(\#k_1 \cdot \#k_2)^{-1}\cdot\sum_{c_i\in k_i}d_C(c_1,c_2)$, where $\#k_i$ is the cardinality of the finite set $k_i$.
    \item ward linkage: see  \cite{ward1964linkage}
\end{itemize}

\noindent It is well known that single linkage is very sensitive to outliers, while complete linkage is the most conservative choice in term of clustering points together. Average linkage displays a kind of in-between behaviour. For this reason we resorted to average linkage.

\section{Continuity Proposition}
\label{sec:proof-continuity}
Having a continuity result of the distance between dendrograms with respect to some metric between point clouds would surely benefit the consistency and the interpretability of the framework: whenever a representation of a datum is employed, looking at how changes in the representation reflect on changes in the initial datum can help both assessing the consistency of the pipeline, and familiarizing with the representation. 
In our case we are particularly interested in looking at what happens at the distance between dendrograms as two sequences of point clouds get closer and closer. To do so, we introduce the definitions of the Hausdorff and Gromov-Hausdorff metrics between point clouds. We then prove the continuity proposition between Hausdorff distance and the Edit distance, from which it follows the continuity proposition between Gromov-Hausdorff distance and the Edit distance.

\smallskip

\noindent Given  $C=\{x_1,\ldots, x_n\}$ and $C'=\{y_1,\ldots, y_m\}$ two point clouds in a metric space $(X,d)$, we can build at least a function $\gamma:C\rightarrow C'$ such that $\gamma(x_i)$ is (one of) the closest point(s) to $x_i$, belonging to the cloud $C'$. Similarly, we can build $\varphi:C'\rightarrow C$ so that $\varphi(y_j)$ is (one of) the closest point(s) to $y_j$, belonging to the cloud $C$.
The Hausdorff distance between $C$ and $C'$ is given by:
\begin{equation}
    \centering
    d_H(C,C')=\text{max}\{\text{max}_{x\in C}d(x,\gamma(x)),\text{ max}_{y\in C'}d(y,\varphi(y))\}
\end{equation}

\noindent The distance $d_H$ has been proven to be a metric for the space of all compact subsets of $X$ \cite{rockafellar2009variational}.

\smallskip

\noindent Leveraging on the Hausdorff distance, given two compact metric spaces $X$ and $Y$ we define the Gromov-Hausdorff metric as $d_{G-H}(X,Y):=\inf d_H(\gamma(X),\varphi(Y))$ where $\gamma$ and $\varphi$ vary over all possible isometries of (respectively) $X$ and $Y$ into another (common) metric space $Z$ \cite{burago2022course}.

\smallskip

\noindent Consider two point clouds in the metric space $(X,d)$, $C=\{x_1,\ldots, x_n\}$ and $C'=\{y_1,\ldots, y_m\}$ and we consider $T_C$ and $T_{C'}$ the single linkage hierarchical clustering dendrograms obtained from $C$ and $C'$ respectively. 
In the following, we prove the following result:

\begin{prop}\label{prop:cont_functions}
Given $C=\{x_1,\ldots, x_n\}$ and $C'=\{y_1,\ldots, y_m\}$  point clouds in a metric space $(X,d)$ and given $T_C$ and $T_{C'}$ single linkage hierarchical clustering dendrograms obtained from $C$ and $C'$ respectively, there is a simplicial complex $S$
and two functions $f:S\rightarrow \mathbb{R}$ and $g:S\rightarrow \mathbb{R}$ such that the merge tree associated to $f$ (via sublevel set filtration) is isomorphic to $T_C$, 
the merge tree associated to $g$ is isomorphic to $T_{C'}$, and $\parallel f-g\parallel_\infty \leq 2d_H(C,C')$.
\end{prop}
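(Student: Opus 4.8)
The plan is to realize both single-linkage dendrograms as merge trees of two filtration functions defined on one common simplicial complex, and then to reduce the sup-norm estimate to a triangle inequality for the metric $d$. First I would take as common vertex set the disjoint union $V = C \sqcup C'$ and let $S$ be the flag (Rips-type) complex on the complete graph with vertex set $V$; since merge trees only record the evolution of connected components, it suffices to control the $1$-skeleton of $S$. The key device is to collapse each cloud onto the other using the nearest-point maps: define $\bar v = x_i$ if $v = x_i$ and $\bar v = \varphi(y_j)$ if $v = y_j$, so that the map $v \mapsto \bar v$ projects all of $V$ into $C$; symmetrically set $\hat v = y_j$ if $v = y_j$ and $\hat v = \gamma(x_i)$ if $v = x_i$, projecting $V$ into $C'$. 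On edges I would then put $f(\{v,v'\}) = d(\bar v, \bar{v'})$ and $g(\{v,v'\}) = d(\hat v, \hat{v'})$, assign value $0$ to every vertex, and extend to higher simplices by the maximum of the edge values, which makes $f$ and $g$ monotone.

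Second, I would verify the two isomorphisms. The sublevel complex of $f$ at height $t$ contains an edge $\{v,v'\}$ exactly when $d(\bar v, \bar{v'}) \le t$; hence its connected components are precisely the single-linkage clusters of the image set $\{\bar v\}$, which is all of $C$. At height $0$ the components are the fibers of $v \mapsto \bar v$ (each containing exactly one $x_i$ together with those $y_j$ mapped to it by $\varphi$), giving $n$ leaves born at $0$, and as $t$ grows these fibers merge at the inter-point distances $d(x_i,x_{i'})$, since every cross-edge between two fibers carries that common weight. Invoking the standard identification of the $\pi_0$-merge tree of a threshold-graph filtration with the single-linkage dendrogram, this merge tree is isomorphic to $T_C$; the same argument with $v \mapsto \hat v$ and $C'$ gives $T_{C'}$ for $g$.

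Finally, for the norm bound I would note that $f$ and $g$ agree (both $0$) on vertices, so $\|f-g\|_\infty$ is attained on an edge, the maximum-based extension to higher simplices being unable to increase the difference since $|\max_i a_i - \max_i b_i| \le \max_i |a_i - b_i|$. For each vertex one has $d(\bar v, \hat v) \le d_H(C,C')$: if $v = x_i$ this is $d(x_i, \gamma(x_i))$ and if $v = y_j$ it is $d(y_j, \varphi(y_j))$, both bounded by $d_H(C,C')$ by definition of the Hausdorff distance. The inequality $|d(a,b) - d(a',b')| \le d(a,a') + d(b,b')$ then yields $|f(\{v,v'\}) - g(\{v,v'\})| = |d(\bar v,\bar{v'}) - d(\hat v,\hat{v'})| \le d(\bar v,\hat v) + d(\bar{v'},\hat{v'}) \le 2 d_H(C,C')$, which is the claimed estimate.

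I expect the main obstacle to be the careful bookkeeping in the merge-tree isomorphism of the second step: one must check that attaching the extra vertices through zero-weight edges at height $0$ genuinely collapses each fiber to a single leaf (ghosting the resulting order-$2$ and degenerate vertices, as in Fig.~\ref{fig:distances}) and does not create spurious branches, and that simultaneous merges at equal heights are matched correctly on both sides. By contrast, the sup-norm estimate is a routine triangle-inequality computation once the functions are set up.
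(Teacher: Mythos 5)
Your proposal is correct and follows essentially the same route as the paper: the same complete graph on $C\sqcup C'$, the same filtration functions obtained by pulling back $d$ along the nearest-point projections (your $\bar v$, $\hat v$ notation just packages the paper's three edge-type cases uniformly), the same identification of the sublevel-set components with single-linkage clusters, and the same triangle-inequality bound. The only cosmetic difference is that your inequality $|d(\bar v,\bar v')-d(\hat v,\hat v')|\leq d(\bar v,\hat v)+d(\bar v',\hat v')$ replaces the paper's explicit case-by-case estimate on the three kinds of edges, yielding the same constant $2d_H(C,C')$.
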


\begin{proof}
Let $\gamma:C\rightarrow C'$ and $\varphi:C'\rightarrow C$ be the two operators which map a point of a point cloud $C'$ to (one of) the closest point(s) of the other cloud $C$ and viceversa.

\smallskip

\noindent Consider the following simplicial complex $S$. Its $0$ simplices are
${x_1,\ldots,x_n,y_1,\ldots, y_m}$ and its $1$ simplices are all possible edges between $0$ simplices, forming a complete graph. 

\smallskip

\noindent Now we define two functions $f:S\rightarrow \mathbb{R}$ and $g:S\rightarrow \mathbb{R}$ such that
the merge trees $T_f$ and $T_g$ obtained with the lower star filtration from $f$ and $g$ (see \cite{pegoraro2021functional}, Section 2) are isomorphic to $T_C$ and $T_{C'}$. 

\smallskip

\noindent Define: $f(s)=0$ for every $0$ simplex $s$. Then for a $1$ simplex of the form $e_{ij}=(x_i,x_j)$, we have $f(e_{ij})=d(x_i,x_j)$. For $1$ simplices of the form $e'_{ij}=(y_i,x_j)$ we have $f(e'_{ij}) = d(\varphi(y_i),x_j)$. Lastly, for $1$ simplices of the form  $e''_{ij}=(y_i,y_j)$ we have $f(e''_{ij}) = d(\varphi(y_i),\varphi(y_j))$.
Note that $t\in Im(f)$ iff $t=d(x_i,x_j)$ for some $i$ and $j$. Clearly, $f$ is a finite set and we can order it: $t_0=0<t_1<\ldots$.

\smallskip

\noindent Similarly we define $g(e''_{ij})=d(y_i,y_j)$, $g(e'_{ij}) = (y_i,\gamma(x_j))$ and $f(e''_{ij}) = (y_i,y_j)$.

\smallskip

\noindent Consider now the connected components of the graph $S^f_{t}:=\{s\in S|f(s)\leq t\}$ for $t\in \mathbb{R}$. If $t<0$, $S^f_{t}$ is empty. If $t=t_0=0$, then all $0$ simplices are in $S^f_0$, plus the $1$ simplices of the form $(x_i,y_j)$ such that $\varphi(y_j)=x_i$ and $(y_i,y_j)$ such that $\varphi(y_i)=\varphi(y_j)$.
This means that every vertex $y_i$ is connected with exactly one point $x_j$ and with all other $y_k$ such that $\varphi(y_k)=x_j$. That is, there are $n$ path connected components, one for each $x_i$. Call such components $[x_i]$.

\medskip

\noindent Consider the value $t=t_1=d(x_i,x_j)$. For every $y\in\varphi^{-1}(x_i)$ and $y'\in\varphi^{-1}(x_j)$, we have $f((y,y'))=f((x_i,y'))=f((y,x_j))=f((x_i,x_j))=d(x_i,x_j)$ and so all these $1$ simplices get added, when passing from $S^f_0$ to $S^f_{t_1}$.
Moreover, these are the only ones which get added. 
Which means that we get all possible edges between $[x_i]$ and $[x_j]$ but all others components are left unchanged. And this happens whenever we hit a level $t_k=d(x_i,x_j)$: we add to the simplicial complexes $S^f_{t_k}$ all possible edges between $[x_i]$ and $[x_j]$.

\smallskip

\noindent Now, we build the single linkage hierarchical dendogram $T_C$ associated to $C$, with labels given by 
$\{\{x_1\},\ldots,\{x_n\}\}$, and the merge tree $T_f$ associated to $f:S\rightarrow \mathbb{R}$ with labels $\{[x_1],\ldots, [x_n]\}$. 
An internal vertices of $T_{C}$ indicating the merging of two leaves $\{x_i\}$ and $\{x_j\}$ will be called $\{x_i,x_j\}$, and similarly
a vertex called $\{x_i,x_j,x_k\}$ indicates that the leaves of the subtree rooted in that vertex are $\{x_i\}$, $\{x_j\}$ and $\{x_k\}$. In the same fashion, an internal vertex of $T_f$ where to components $[x_i]$ and $[x_j]$ merge is named $[x_i]\bigcup[x_j]$. A vertex called $[x_i]\bigcup[x_j]\bigcup[x_k]$ is associated to the origin of the connected component $[x_i]\bigcup[x_j]\bigcup[x_k]$.
Thus, we can define a map $\eta:V_{T_{C}}\rightarrow V_{T_f}$ induced by $\eta({x_i})=[x_i]$ and $\eta(\{x_i,x_j,x_k\})=[x_i]\bigcup[x_j]\bigcup[x_k]$ which is an isomorphism of merge trees.
An analogous proof yields the isomorphism between $T_{C'}$ and $T_g$.

\smallskip

\noindent To conclude the proof it is enough to notice that: $||f-g||_{\infty} \leq 2\varepsilon$ with $\varepsilon = d_H(C,C')$. In fact, for vertices $s$: $f(s)=g(s)=0$. For an edge $e$, we have the following possibilities:
\begin{itemize}
    \item $e=(x_i,x_j)$: $|f(e)-g(e)|=|d(x_i,x_j)-d(\gamma(x_i),\gamma(x_j))|$. We have $d(x_i,\gamma(x_i))\leq \varepsilon$, $d(x_i,x_j)\leq d(\gamma(x_i),\gamma(x_j)) + 2\varepsilon$ and $ d(\gamma(x_i),\gamma(x_j)) \leq d(x_i,x_j) + 2\varepsilon$; which, together, give $|f(e)-g(e)|\leq 2 \varepsilon$. 
    \item $e=(y_i,y_j)$: $|f(e)-g(e)|=|d(\varphi(y_i),\varphi(y_j))-d(y_i,y_j)|$; reasoning as above we obtain $|f(e)-g(e)|\leq 2\varepsilon$
    \item $e=(x_i,y_j)$: $|f(e)-g(e)|=|d(x_i,\varphi(y_j))-d(\gamma(x_i),x_j)|$.
    Again in the same fashion we have: $d(x_i,\varphi(y_j))\leq d(x_i,y_j) + d(y_j,\varphi(y_j)) \leq d(x_i,\gamma(x_i)) + d(\gamma(x_i),x_j) + d(y_j,\varphi(y_j)) \leq d(\gamma(x_i),x_j) + 2\varepsilon$. Which entails $|f(e)-g(e)|\leq 2\varepsilon$
\end{itemize}
\end{proof}

\begin{cor}\label{cor:hausdorff}
Given $C=\{x_1,\ldots, x_n\}$ and $C'=\{y_1,\ldots, y_m\}$  point clouds in $(X,d)$ metric space, and given $T_C$ and $T_{C'}$ the single linkage hierarchical clustering dendrograms obtained from $C$ and $C'$ respectively,
we have $d_E(T_C,T_{C'})\leq 4(n+m)d_H(C,C')$.
\begin{proof}
We apply Proposition \ref{prop:cont_functions} and then we are in the position to use Theorem 1 in \cite{pegoraro2021functional} to obtain that $d_E(T_f,T_g)\leq 2(2d_H(C,C'))\cdot (n + m)$.
\end{proof}
\end{cor}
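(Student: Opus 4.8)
The plan is to deduce the corollary as an essentially immediate consequence of the continuity Proposition \ref{prop:cont_functions} just established, combined with the stability theorem for merge trees from \cite{pegoraro2021functional}. All the geometric effort—realizing both dendrograms as sublevel-set merge trees of two uniformly close functions on a shared domain—has already been spent in Proposition \ref{prop:cont_functions}, so the remaining work is only to feed the resulting sup-norm bound into the stability estimate and track the constants.

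First I would invoke Proposition \ref{prop:cont_functions} to produce the simplicial complex $S$ and the two functions $f,g:S\rightarrow\mathbb{R}$ whose associated merge trees $T_f,T_g$ (obtained via the sublevel-set filtration) satisfy $T_f\cong T_C$ and $T_g\cong T_{C'}$, together with the uniform bound $\|f-g\|_\infty\leq 2d_H(C,C')$. Because $d_E$ is a metric on merge trees taken up to isomorphism (as proved in \cite{pegoraro2021metric}), it is invariant under these isomorphisms, so $d_E(T_C,T_{C'})=d_E(T_f,T_g)$.

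Next I would apply Theorem 1 of \cite{pegoraro2021functional}, which bounds the edit distance between two merge trees arising from functions on a common complex by a constant times the combinatorial size of the complex times the sup-norm distance of the functions. Here the $0$-simplices of $S$ are exactly the $n+m$ points $x_1,\ldots,x_n,y_1,\ldots,y_m$, so the size factor is $n+m$ and the theorem gives $d_E(T_f,T_g)\leq 3(n+m)\|f-g\|_\infty$. Composing the three facts yields $d_E(T_C,T_{C'})=d_E(T_f,T_g)\leq 3(n+m)\cdot 2d_H(C,C')=6(n+m)d_H(C,C')$, which is the claim.

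The only genuine obstacle is verifying that the hypotheses of the cited stability theorem hold verbatim: that $f$ and $g$ live on the identical complex $S$ (guaranteed by the construction in Proposition \ref{prop:cont_functions}), and that the merge trees are extracted with exactly the filtration convention assumed in \cite{pegoraro2021functional}. Once this bookkeeping is confirmed, the bound is a one-line composition and no further computation is needed.
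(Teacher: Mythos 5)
Your proposal is correct and follows exactly the paper's own argument: apply Proposition \ref{prop:cont_functions} to realize $T_C$ and $T_{C'}$ as merge trees $T_f\cong T_C$, $T_g\cong T_{C'}$ of functions on a common complex with $\|f-g\|_\infty\leq 2d_H(C,C')$, then invoke the stability bound of Theorem 1 in \cite{pegoraro2021functional} to get $d_E(T_f,T_g)\leq 3(n+m)\cdot 2d_H(C,C')$. Your explicit remark that $d_E$ is invariant under merge-tree isomorphism is a useful bit of bookkeeping the paper leaves implicit.
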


\noindent With the above results, we can prove a last corollary involving the Gromov-Hausdorff distance between compact metric spaces.

\begin{cor}
Given two finite metric spaces $C=\{x_1,\ldots, x_n\}$ and $C'=\{y_1,\ldots, y_m\}$  and given $T_C$ and $T_{C'}$ the single linkage hierarchical clustering dendrograms obtained from $C$ and $C'$ respectively,
we have $d_E(T_C,T_{C'})\leq 4(n+m)d_{G-H}(C,C')$.
\begin{proof}
We apply Proposition \ref{prop:cont_functions} and Corollary \ref{cor:hausdorff} on the images $\gamma(X)$ and $\varphi(Y)$ for every $\gamma:X\rightarrow Z$,
$\varphi:Y\rightarrow Z$ isometries, and for every $Z$ metric space.
\end{proof}
\end{cor}

\section{Proof about $d^{\mu}_P$ being a metric}
\label{sec:proof-metric}

We prove the following proposition.

\begin{prop}
If there is $M>0$ such that for every $m\leq M$, $\mu([0,m])>0$ the $d_\mu^P$ is a metric. 
\end{prop}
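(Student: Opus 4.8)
The plan is to inherit all the metric axioms from the fact—already established earlier in the excerpt—that $d_E$ is a metric on merge trees, transferring each property through the integral, and to invoke the hypothesis on $\mu$ only where it is genuinely needed, namely for the identity of indiscernibles. First I would check that the integrand $\varepsilon \mapsto d_E(P_\varepsilon(T), P_\varepsilon(T'))$ is integrable. Since $T$ and $T'$ are finite, the pruned trees $P_\varepsilon(T)$ and $P_\varepsilon(T')$ change only when $\varepsilon$ crosses one of the finitely many edge weights, so both are piecewise constant in $\varepsilon$ and realize only finitely many isomorphism types. Hence the integrand is a bounded, piecewise-constant (in particular measurable) function, and because $\mu$ is a finite measure the integral defining $d_P^\mu$ is finite and well-defined.

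Next, non-negativity, symmetry, and the triangle inequality all follow pointwise in $\varepsilon$ and then by integrating. Indeed $d_E(P_\varepsilon(T), P_\varepsilon(T')) \geq 0$ gives $d_P^\mu \geq 0$; symmetry of $d_E$ gives symmetry of $d_P^\mu$; and for any third tree $T''$ the inequality $d_E(P_\varepsilon(T), P_\varepsilon(T'')) \leq d_E(P_\varepsilon(T), P_\varepsilon(T')) + d_E(P_\varepsilon(T'), P_\varepsilon(T''))$ holds for every $\varepsilon$, so monotonicity and additivity of the integral yield $d_P^\mu(T, T'') \leq d_P^\mu(T,T') + d_P^\mu(T', T'')$. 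Likewise, if $T = T'$ (up to isomorphism) then $P_\varepsilon(T) = P_\varepsilon(T')$ for all $\varepsilon$ and the integrand vanishes identically, so $d_P^\mu(T,T') = 0$.

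The only nontrivial axiom—and the sole place where the hypothesis on $\mu$ enters—is the reverse implication, that $d_P^\mu(T,T') = 0$ forces $T = T'$ up to isomorphism. Suppose $d_P^\mu(T,T') = 0$. Since the integrand is non-negative, the set $N = \{\varepsilon : d_E(P_\varepsilon(T), P_\varepsilon(T')) > 0\}$ has $\mu(N) = 0$. Let $\varepsilon_0 > 0$ be the minimum of all edge weights appearing in $T$ and $T'$, which is strictly positive because both trees are finite; for every $\varepsilon < \varepsilon_0$ no leaf edge has weight $\leq \varepsilon$, so pruning acts as the identity and $P_\varepsilon(T) = T$, $P_\varepsilon(T') = T'$. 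Now pick $m$ with $0 < m < \varepsilon_0$ and $m \leq M$, for instance $m = \tfrac{1}{2}\min(\varepsilon_0, M)$. By hypothesis $\mu([0,m]) > 0$, while $\mu(N) = 0$, so $[0,m] \setminus N$ is nonempty; choose $\varepsilon^{\ast}$ in it. Then $\varepsilon^{\ast} < \varepsilon_0$ gives $P_{\varepsilon^{\ast}}(T) = T$ and $P_{\varepsilon^{\ast}}(T') = T'$, while $\varepsilon^{\ast} \notin N$ gives $d_E(P_{\varepsilon^{\ast}}(T), P_{\varepsilon^{\ast}}(T')) = 0$; combining these, $d_E(T, T') = 0$, and since $d_E$ is a metric we conclude $T = T'$.

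I expect the main obstacle to be precisely this last step—articulating why the hypothesis $\mu([0,m]) > 0$ for all small $m$ is the correct one. Its role is to guarantee that $\mu$ charges arbitrarily small thresholds, i.e.\ a regime in which pruning is the identity, so that a single admissible $\varepsilon^{\ast}$ below the minimal edge weight lets us recover the un-pruned trees and apply the separation property of $d_E$. Without positive mass near $0$, one could have $d_P^\mu(T,T') = 0$ for genuinely non-isomorphic trees that happen to coincide after every pruning of positive $\mu$-mass, so the condition is essentially sharp for the identity of indiscernibles. The remaining work is only the measurability-and-finiteness bookkeeping, which is routine given that the integrand is a bounded step function of $\varepsilon$.
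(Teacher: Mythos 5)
Your proof is correct and follows essentially the same route as the paper's: symmetry and the triangle inequality are inherited pointwise from $d_E$ and passed through the integral, and the identity of indiscernibles uses exactly the paper's key observation that $P_\varepsilon$ acts as the identity for $\varepsilon$ below the minimal edge weight, combined with the hypothesis that $\mu$ charges a neighborhood of $0$. Your version is in fact slightly more careful than the paper's (explicit measurability/integrability check, and the choice $m=\tfrac12\min(\varepsilon_0,M)$ cleanly handles the case where the minimal edge weight exceeds $M$), but the underlying argument is the same.
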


\begin{proof}
\end{proof}

\smallskip

\noindent Suppose $d_P^\mu(T,T')=0$. Let $m=\text{min}\{\text{min}_{e\in E_T}w_T(e),\text{ min}_{e'\in E_{T'}}w_{T'}(e')\}$; then for any $\varepsilon\in [0,m)$, $P_\varepsilon(T)=T$ and $P_\varepsilon(T')=T'$. If $d_E(T,T')>0$, since $\mu([0,m))>0$, then:
\[
0<\int_{[0,m)} d_E(P_\varepsilon(T),P_\varepsilon(T'))d\mu(\varepsilon)\leq d_P^\mu(T,T')=0
\]
which is absurd. But then $d_E(T,T')=0$ and so $T=T'$.

\smallskip

\noindent Symmetry is obvious.

\smallskip

\noindent  The triangle inequality holds for $d_E$ and so
\[d_E(P_\varepsilon(T),P_\varepsilon(T')) \leq d_E(P_\varepsilon(T),P_\varepsilon(T'')) + d_E(P_\varepsilon(T''),P_\varepsilon(T'))
\]
The linearity of the integral then entails $d_P^\mu(T,T')\leq d_P^\mu(T,T'')+d_P^\mu(''T,T')$.

\section{Heterogeneity-based Simulation for $d^P_\mu$}
\label{sec:d_mu_simulation}

In this section, we test the metric $d^P_\mu$ and the whole pipeline employed in the case study of the main manuscript in a supervised - in a broad sense - and easier setting.
In particular, the aim of this simulation is to showcase the differences between $d_E$ and $d_\mu^P$ and to which extent $d_\mu^P$ captures heterogeneity in a point cloud.

\smallskip

\noindent We generate point clouds in $\mathbb{R}^2$ according to two generating processes. The size $n^i_1$ of the $i$-th point cloud of the first group is sampled uniformly from $[2,20]\bigcap \mathbb{Z}$ and then a sample of size $(n_1^i, 2)$ is taken from a normal distribution $\mathcal{N}(0, \sigma_1)$, with $\sigma_1 = 1$. Similarly, the $j$-th point cloud of the second group has cardinality $n_2^j$ sampled uniformly from $[2,10]\bigcap \mathbb{Z}$, and the cloud itself is taken as a sample of size 
 $(n_2^j, 2)$ distributed according to $\mathcal{N}(0, \sigma_2)$, with $\sigma_2 = 2$.
The data set of point clouds contains $50$ clouds of the first group and $50$ of the second group.

\smallskip

\noindent From the data-generating processes it is clear that the sources of variability between the two groups arise potentially from the different cardinalities of the point clouds and variance within each cloud. 
We want to show that, while the metric $d_E$ is susceptible to both kind of variability,  $d_\mu^P$, with an appropriately chosen measure $\mu$, can mitigate the variability coming from higher cardinalities in the clouds sampled according to the first process. 
In particular, group 1 is expected to display a lower level of heterogeneity within each point cloud and thus those trees, for our purposes, should be regarded as more similar between each other compared to the other trees. The second group instead may not display a clear clustering structure, in fact, despite exhibiting a common level of heterogeneity, the different number of leaves and the different merging structure at the level of very heterogeneous leaves could prevent all such dendrograms to form a recognizable cluster - or, equivalently, could give birth to a cluster with higher dispersion.

\begin{figure}[h!]
	\centering
	\begin{subfigure}[c]{0.47\textwidth}
    	\centering
    	\includegraphics[width = \textwidth]{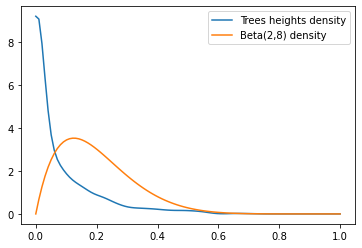}
    	\caption{\textbf{Density of vertices heights from trees in the simulation data, along with the chosen Beta distribution, which has parameters $a=2$ and $b=8$.}}
    	\label{fig:sim_mu}
    \end{subfigure}
	\begin{subfigure}[c]{0.47\textwidth}
		\centering
		\includegraphics[width = \textwidth]{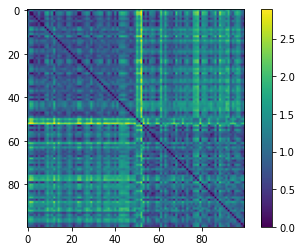}
		\caption{\textbf{Matrix of pairwise distances obtained with $d_E$.}}
		\label{fig:TED_matrix}
	\end{subfigure}
	
		\centering
	\begin{subfigure}[c]{0.47\textwidth}
    	\centering
    	\includegraphics[width = \textwidth]{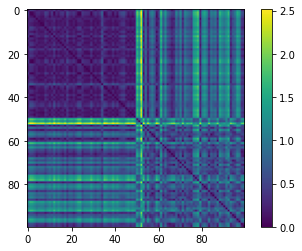}
    	\caption{\textbf{Matrix of pairwise distances obtained with $d^P_\mu$.}}
    	\label{fig:pruned_matrix}
    \end{subfigure}
	\begin{subfigure}[c]{0.47\textwidth}
		\centering
		\includegraphics[width = \textwidth]{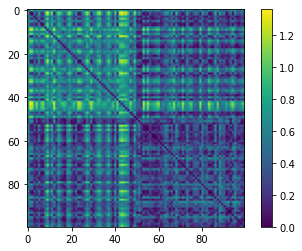}
		\caption{\textbf{Absolute differences between the matrix obtained with $d_E$ and $d^P_\mu$.}}
		\label{fig:diff_matrix}
	\end{subfigure}
	
\caption{The plot in the left upper corner is used to fix $\mu$ in the case study of Section \ref{sec:d_mu_simulation}, according to the procedure detailed in Figure 7 of the manuscript; the other figures show the pairwise distance matrices obtained in the case study of Section \ref{sec:d_mu_simulation}.}
\label{fig:matrices}
\end{figure}

\begin{figure}[h!]
	\centering
	\begin{subfigure}[c]{0.94\textwidth}
    	\centering
    	\includegraphics[width = \textwidth]{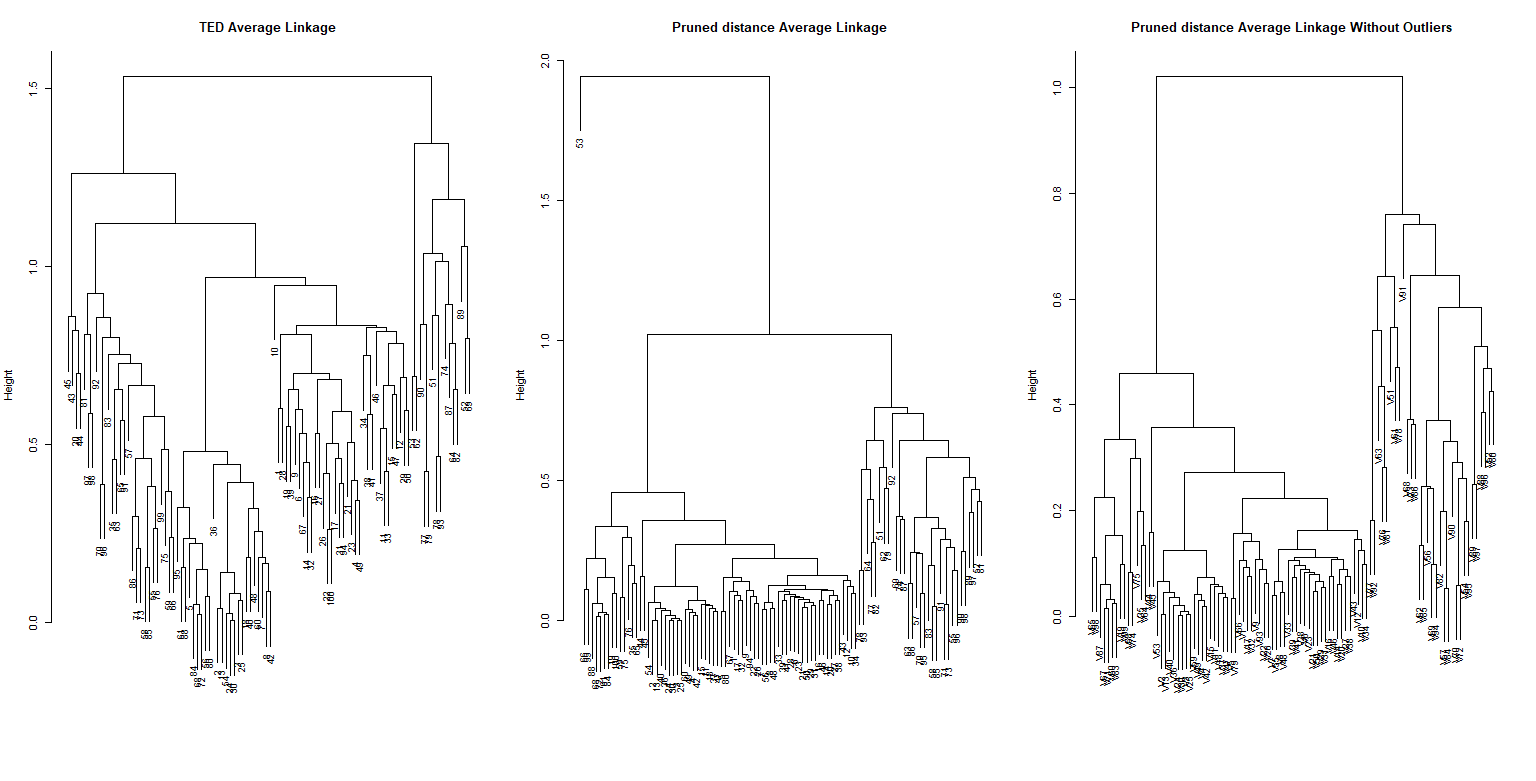}
    	\caption{\textbf{Hierarchical Clustering with average linkage of the pairwise distance matrices respectively obtained from $d_E$, $d_\mu^P$ and $d_\mu^P$ but without the outlier represented by vertex $53$ in the central dendrogram.}}
    	\label{fig:clustering}
    \end{subfigure}

		\centering
	\begin{subfigure}[c]{0.47\textwidth}
		\centering
		\includegraphics[width = \textwidth]{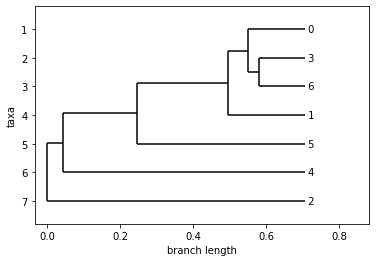}
		\caption{\textbf{The outlier identified by the hierarchical clustering with average linkage of the matrix induced by $d_\mu^P$.}}
		\label{fig:outlier}
	\end{subfigure}
	\begin{subfigure}[c]{0.47\textwidth}
		\centering
		\includegraphics[width = \textwidth]{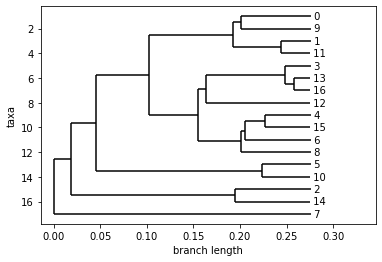}
		\caption{\textbf{A randomly chosen dendrogram belonging to group $2$. The difference in terms of heteogeneity between leaves and number of leaves, with the dendrogram in Fig. \ref{fig:outlier} is evident.}}
		\label{fig:group_1}
	\end{subfigure}	
\caption{Cluster analysis of pairwise distance matrices obtained in the case study of Section \ref{sec:d_mu_simulation}.}
\label{fig:dens_est}
\end{figure}

\smallskip

\noindent Following the pipeline presented in the main manuscript, we extract average linkage hierarchical clustering dendrograms from the set of point clouds and take pairwise distances both with $d_E$ and  
$d_\mu^P$. 
Examples of dendrograms belonging to the first and second groups can be found, respectively, in Fig. S\ref{fig:outlier} and S\ref{fig:group_1}.
We select $\mu$ as in the main manuscript, Section 4.3.2, with the final choice being a Beta distribution with parameters $a=2$, $b=8$, as shown in Fig. S\ref{fig:sim_mu}.
The two matrices are reported in Fig. S\ref{fig:matrices}, with data being ordered according to the two groups: the first $50$ point clouds belong to the first group, and the following $50$ to the second. By visual inspection of Fig. S\ref{fig:TED_matrix} and S\ref{fig:pruned_matrix} we can clearly see that $d_E$ sees very little structure in the data, because of the two sources of variability (cardinality and variance) mixing up and preventing $d_E$ to discriminate between group $1$ and $2$.
Instead $d_\mu^P$ recognizes a clear and pronounced cluster made by point clouds from group $1$ plus, potentially, some other point clouds belonging to group $2$.
The rest of the point clouds of group $2$ still show some agglomerative structure, but less evident. The matrix in Fig. S\ref{fig:diff_matrix} shows the pointwise differences between the values obtained with $d_E$ and $d_\mu^P$, highlighting how the different behaviour of the two metrics concentrates on the data belonging to the first group.

\begin{figure}[h!]
	\centering
	\begin{subfigure}[c]{0.47\textwidth}
    	\centering
    	\includegraphics[width = \textwidth]{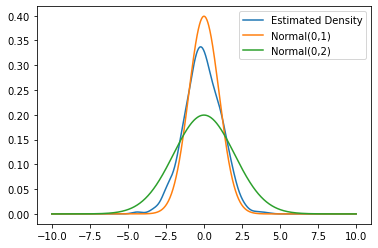}
    	\caption{E\textbf{stimated density of the first component of the data in the first cluster identified by $d_\mu^P$, versus the densities generating the samples two groups.}}
    	\label{fig:dens_1_x}
    \end{subfigure}
	\begin{subfigure}[c]{0.47\textwidth}
		\centering
		\includegraphics[width = \textwidth]{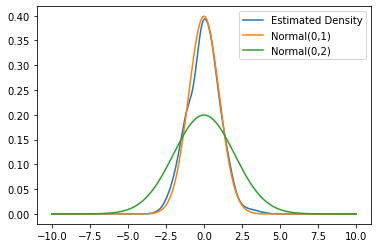}
		\caption{\textbf{Estimated density of the second component of the data in the first cluster identified by $d_\mu^P$, versus the densities generating the samples two groups.}}
		\label{fig:dens_1_y}
	\end{subfigure}
	
		\centering
	\begin{subfigure}[c]{0.47\textwidth}
    	\centering
    	\includegraphics[width = \textwidth]{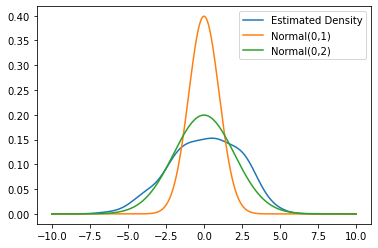}
    	\caption{\textbf{Estimated density of the first component of the data in the second cluster identified by $d_\mu^P$, versus the densities generating the samples two groups.}}
    	\label{fig:dens_2_x}
    \end{subfigure}
	\begin{subfigure}[c]{0.47\textwidth}
		\centering
		\includegraphics[width = \textwidth]{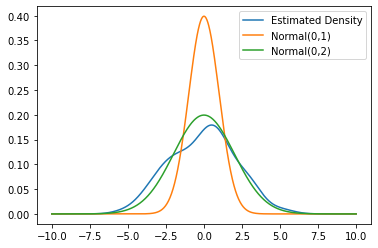}
		\caption{\textbf{Estimated density of the second component of the data in the second cluster identified by $d_\mu^P$, versus the densities generating the samples two groups.}}
		\label{fig:dens_2_y}
	\end{subfigure}
	
\caption{Densities estimated through the aggregation of the data collected in the two clusters identified by $d_\mu^P$.}
\label{fig:est_dens}
\end{figure}

\smallskip

\noindent To get more insights into the clustering structures expressed by $d_E$ and $d_\mu^P$ we extract the hierarchical clustering dendrograms with average linkage from the two matrices. These dendrograms are reported in Fig. S\ref{fig:clustering}. 
The leftmost tree is obtained from $d_E$ and the central from $d_\mu^P$. To better compare the clustering structures we remove from this last dendrogram the outlier ($v53$), obtaining the rightmost tree. 

\smallskip

\noindent Visual inspection of the dendrograms in Fig. S\ref{fig:clustering} reveals a two-clusters structure in both metric spaces, with this structure being much more recognizable in the metric space induced by $d_\mu^P$. In particular, the rightmost dendrogram shows a very cohesive and compact cluster, with very low internal variability, which is absent in the leftmost tree. The other cluster of the same tree, instead, displays a much higher level of variability.

\smallskip

\noindent Now we show that this clustering structure reflects the group structure that generated our data.
We cut the rightmost tree to obtain two clusters. Then, for each cluster, we aggregate the points contained in the data of such cluster and we estimate the marginal densities from the obtained samples. 
The results of this estimation pipeline is showcased in Fig. S\ref{fig:est_dens}.
We see that we retrieve the two distributions which we used to generate the components of the point clouds of the two groups.

\smallskip

\noindent This is precisely the behaviour we aimed to achieve: being insensitive to the cardinality of small homogeneous features, while still being sensitive to cardinalities and merging structures characterized by high heterogeneity.

\section{Additional plots for clustering interpretation}
\label{sec:additional}

Fig. \ref{fig:additional} integrates the results, in terms of cluster characterization.

\begin{figure}[t]
	\centering
    	\includegraphics[width = \textwidth]{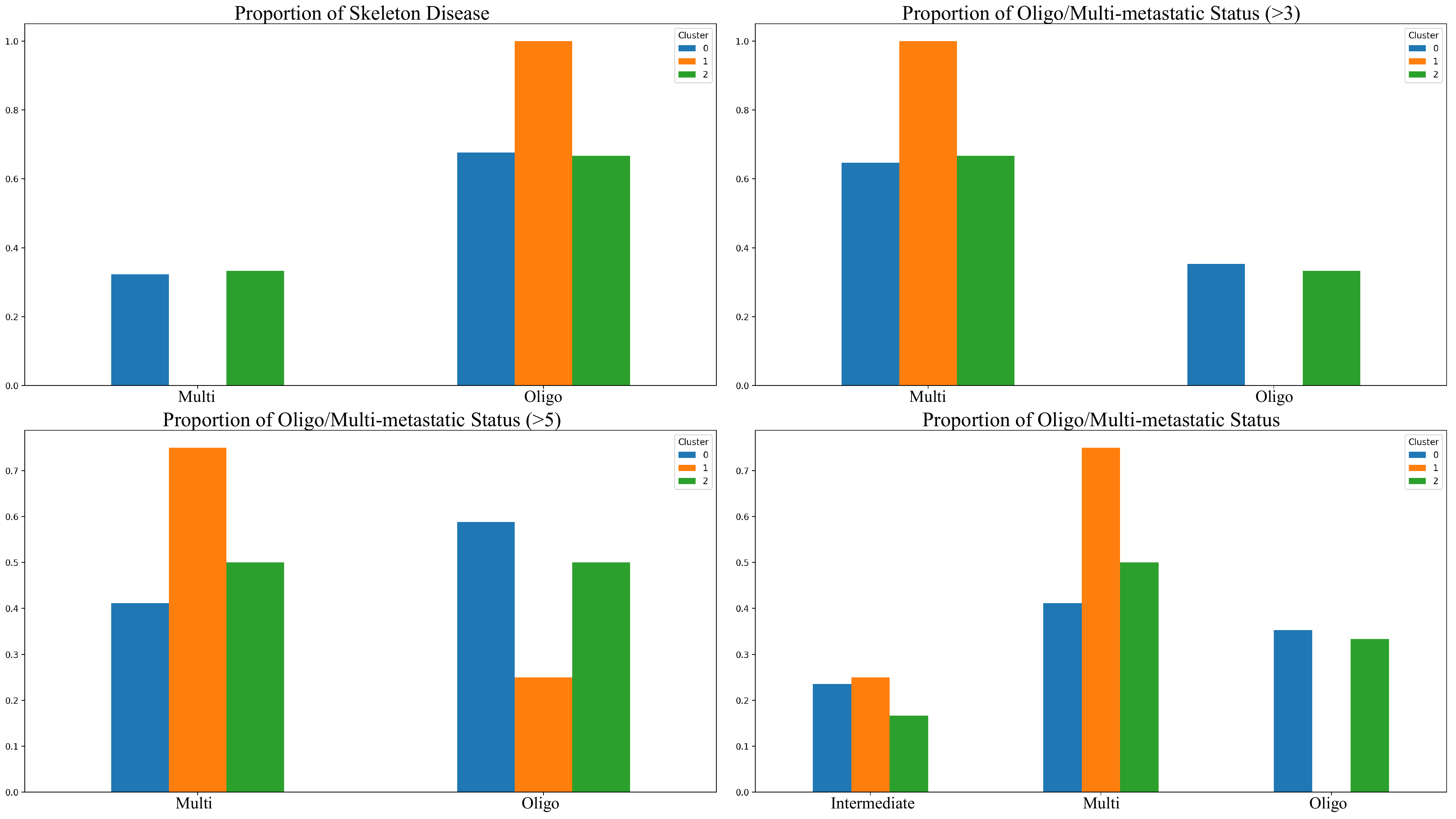}
\caption{Results of clustering characterization: the proportion of skeleton disease and of the oligo/multi-metastatic status as devised by the two clinical cut-offs (3 and 5 lesions) are plotter per each of the three group.}
\label{fig:additional}
\end{figure}

\newpage

\section*{Data and code availability}
The data supporting the findings of this study are available from Azienda Ospedaliero-Universitaria Pisana but restrictions apply to the availability of these data, which were used under license for the current study, and so are not publicly available. 

\noindent The code implemented during the current study together with simulation data are available on github at this \href{https://github.com/pego91/pruned-edit-distance}{link}.

\section*{Acknowledgements (not compulsory)}
This work was carried out as part of the PhD Thesis of Lara Cavinato, under the supervision of Professor Francesca Ieva, and part of the PhD Thesis of Matteo Pegoraro, under the supervision of Professor Piercesare Secchi from Politecnico di Milano.
We acknowledge all the personnel of Medicine Department of Azienda Ospedaliero-Universitaria Pisana for the assistance during the PET/CT scans, segmentation of lesions, extraction of radiomic features and retrieval of patients’ personal information from EHR.

\section*{Author contributions statement}
L.C. conceived the pipeline, set up the case study, analyzed the results, prepared the figures, and wrote the manuscript.
M.P. formulated and tuned the pruned tree edit distance, provided the mathematical proofs and the simulation study, and wrote the manuscript.
A.R. contributed to implement the patient representation pipeline.
M.S. segmented the Prostate Cancer lesions and extracted the radiomic features for all patients in the case study.
P.A.E. collected the data and enrolled the patients in the clinical study.
F.I. supervised the analyses and the conception of the pipeline.
L.C., M.P., A.R. and F.I. reviewed and approved the manuscript. 

\section*{Additional information}

\textbf{Competing interests} The author(s) declare no competing interests.

\smallskip

\noindent \textbf{Editorial process} This version of the article has been accepted for publication, after peer review (when applicable) but is not the Version of Record and does not reflect post-acceptance improvements, or any corrections. The Version of Record is available online at this \href{https://doi.org/10.1038/s41598-022-23752-2}{DOI}.

\newpage

\bibliography{manuscript}

\end{document}